\documentclass[11pt]{article}
\usepackage[margin=1in]{geometry}
\usepackage{amsmath,amssymb,amsthm,mathtools}
\usepackage{tikz}
\usepackage[nopatch=footnote]{microtype}
\usepackage[hidelinks]{hyperref}
\newtheorem{theorem}{Theorem}[section]
\newtheorem{proposition}[theorem]{Proposition}
\newtheorem{lemma}[theorem]{Lemma}

\theoremstyle{definition}
\newtheorem{definition}[theorem]{Definition}
\theoremstyle{remark}

\newcommand{\Sym}{\mathfrak S}
\newcommand{\Sepp}{\operatorname{Sepp}}
\newcommand{\ord}{\operatorname{ord}}
\newcommand{\lcm}{\operatorname{lcm}}
\newcommand{\cP}{\mathcal P}
\newcommand{\cF}{\mathcal F}
\title{An Improved Lower Bound for Separating Words\\
by Permutation Automata}
\author{Chen Xu and Kenneth Regan}
\date{October 8, 2026}
\begin{document}
\maketitle

\begin{abstract}
We prove that some pairs of binary words of every sufficiently large
prescribed length $n$ require $\Omega(\log n\log\log n)$ states to separate
by a deterministic finite automaton whose letters act as permutations.
We obtain this bound by constructing distinct equal-length positive binary
words that form an identity of $\Sym_k$ of length $\exp(O(k/\log k))$.
This improves the previous lower bound $(3/2-o(1))\log n$ of~\cite{BKSS}.
The proof combines a positive-word construction with a cover
of permutation orders.  Consecutive segments of prime powers reduce the
covering problem to enumerating sets of positive integers with bounded sum.
The required estimates use the central binomial coefficient and the
classical asymptotic formula for partitions into distinct parts.  An appendix gives sharper constants and
lower-order terms using a greedy cover and a
quantitative prime number theorem.
\end{abstract}

\section{Introduction}
\label{sec:introduction}

A deterministic finite automaton \emph{separates} two words when it accepts
exactly one of them.  In a \emph{permutation automaton}, each input letter
acts bijectively on the state set.  For distinct binary words $u,v$, let
$\Sepp(u,v)$ be the minimum number of states of such an automaton separating
them.  Define
\[
 \Sepp(n)=\max_{\substack{u\ne v\\|u|,|v|\le n}}\Sepp(u,v),
 \qquad
 \Sepp_{=}(n)=\max_{\substack{u\ne v\\|u|=|v|=n}}\Sepp(u,v).
\]
The alphabet is $\{0,1\}$, with letter permutations $\sigma_0,\sigma_1$.
All logarithms are natural.  Words act on the right:
$q\cdot uv=(q\cdot u)\cdot v$.

Robson~\cite{Robson} studied separation by permutation automata and proved
an $O(\sqrt n)$ upper bound.  Bulatov, Karpova, Shur, and
Startsev constructed positive
identities of $\Sym_k$ of length $\exp(2k/3+O(k/\log k))$, giving
$\Sepp(n)\ge(3/2-o(1))\log n$
(Theorem~16 and Corollary~17 in~\cite{BKSS}).  For $k\ge5$, Karpova and
Shur characterized identities of the form
$(01)^a(10)^b=(10)^b(01)^a$ by two-integer covers and proved this length
bound optimal within that family (Theorems~18 and~21 in~\cite{KS}).
Our main result improves the order of the lower bound.

\begin{theorem}\label{thm:elementary-sepp}
$\Sepp(n)\ge\Sepp_{=}(n)=\Omega(\log n\log\log n)$.
\end{theorem}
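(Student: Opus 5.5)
Our plan is to derive the bound from a construction of distinct equal-length positive binary words $u\ne v$, $|u|=|v|=L$, such that $u$ and $v$ act identically under every assignment $\sigma_0,\sigma_1\in\Sym_k$, with $L\le\exp(Ck/\log k)$. Such a pair cannot be separated by any permutation automaton with at most $k$ states, so $\Sepp(u,v)>k$. Given large $n$, we take the largest $k$ with $L(k)\le n$. Then $\log n\ge Ck/\log k\cdot(1+o(1))^{-1}$ fails for $k+1$, which gives $k\gtrsim \log n\log\log n/C$. To pass from length $L\le n$ to length exactly $n$, we pad both words with the same suffix $0^{n-L}$. Since words act on the right and letters act bijectively, $q\cdot u0^m=q\cdot v0^m$ for all $q$ iff $q\cdot u=q\cdot v$ for all $q$, so the padded words still form an identity. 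The inequality $\Sepp(n)\ge\Sepp_=(n)$ is immediate from the definitions.

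\emph{Positive-word construction.} We generalize the Karpova--Shur family. For positive words $x,y$ of equal length we set $u=xy$ and $v=yx$, and choose $x,y$ so that under every permutation assignment either $x$ or $y$ acts as the identity. One candidate is a word $w$ in $\sigma_0,\sigma_1$ with $x=w^{a}$, $y=\bar w^{b}$, where $a,b$ are integers and $\bar w$ is a conjugate-type variant. Iterating the commutator trick with nested powers then lets us require only that each permutation $w$ of $\Sym_k$ has order dividing one of a family $M_1,\dots,M_r$ of exponents. Concretely, we would build $u_j,v_j$ recursively: $u_{j+1}=u_j^{M}v_j^{M}\cdots$, with lengths multiplying. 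The requirement then becomes a \emph{cover} of $\{\ord\pi:\pi\in\Sym_k\}$ by few moduli $M_i$ with $\sum\log M_i=O(k/\log k)$, plus an overhead that we must keep subexponential in $k/\log k$.

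\emph{Covering step (main obstacle).} The order of $\pi\in\Sym_k$ is $\lcm$ of its cycle lengths, and these sum to at most $k$. We split the prime powers up to $k$ into consecutive segments $I_1,\dots,I_r$ of total "mass" about $k/\log k$ and take $M_i$ as the $\lcm$ of all prime powers outside a chosen small set. The key point is that a permutation can have cycles divisible by only boundedly many large prime powers whose sum is at most $k$. Covering all orders therefore reduces to enumerating the possible sets $S$ of positive integers (indices of large prime powers hit) with bounded sum. By a central-binomial bound, the number of such $S$ is $e^{O(\sqrt{\cdot})}$, and we control it by the asymptotic $\log q(m)\sim\pi\sqrt{m/3}$ for partitions into distinct parts. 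Each $M_i$ is then the lcm of all small prime powers, which costs $e^{(1+o(1))\sqrt k}$-type factors via the prime number theorem, together with the prime powers indexed by $S$.

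The hardest step is to balance three quantities: the segment size, the number of covering sets, and the product of the lengths. We need the total length $\prod$ or $\sum$ of the $M_i$ to stay $\exp(O(k/\log k))$. We would first try segment length $\asymp\log k$ primes around size $k/\log k$, and then verify the counting estimates.
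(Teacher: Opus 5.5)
Your reduction from identities to $\Sepp_{=}(n)$ by padding with $0^{n-L}$ is fine and matches the paper. The proposal has a genuine gap, though, in the step that turns a cover into a positive identity, and as you set it up that step cannot work. You build $u_{j+1}=u_j^{M}v_j^{M}\cdots$ ``with lengths multiplying'' and then ask for $\sum_i\log M_i=O(k/\log k)$. No cover satisfies this. Each $j\le k$ is the order of a $j$-cycle, so $j$ divides some $M_i$. Hence $\lcm(1,\ldots,k)\mid\prod_i M_i$ and $\sum_i\log M_i\ge\psi(k)\sim k$. Any scheme whose length is at least the product of the moduli therefore gives only $\exp(\Omega(k))$, the order of the trivial identity $0^{E(k)}=1^{E(k)}$, i.e.\ $\Sepp(n)=\Omega(\log n)$.

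Your other ingredients do not close the gap either. The form $xy=yx$ with ``$x$ or $y$ acting trivially'' only exploits a two-integer cover; this is the Bulatov et al./Karpova--Shur family, whose optimum is $\exp(\Theta(k))$. A commutator trick needs inverses, and the positive substitute $\sigma^{E(k)-1}$ reintroduces $E(k)$.

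The missing idea is a combination whose logarithmic length is $O(r)+\max_i\log d_i$. The paper gets this from positive $d$-inverses:
\[
U_1=0^{d_1},\quad V_1=1^{d_1},\quad C_i=U_i^{\dagger d_{i+1}},\quad U_{i+1}=U_iC_iV_i,\quad V_{i+1}=V_iC_iU_i.
\]
Here $C_i$ reverses the blocks of $U_i$ and replaces each exponent $a$ by the representative of $-a$ in $\{1,\ldots,d_{i+1}\}$. If $U_i=V_i$ under an evaluation, then $U_{i+1}=V_{i+1}$. If $\sigma_0^{d_{i+1}}=\sigma_1^{d_{i+1}}=1$, then $C_i=U_i^{-1}$ and both sides equal $V_i$. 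Blocks triple at each step and each block has length at most $\max d$, so $|U_r|\le 3^{r-1}\max d$. Finally, the substitution $0\mapsto01$, $1\mapsto10$ makes the two evaluated letters conjugate, so the single condition $\ord(\sigma_0\sigma_1)\mid d$ supplies both hypotheses.

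Your covering parameters also fail, in both versions you state.
\begin{itemize}
\item With segments of weight about $k/\log k$, the $i$th segment starts only near $ik/\log k$. A permutation can therefore meet $\Theta(\sqrt{\log k})$ segments, not boundedly many, and the products reach only $O(k/\sqrt{\log k})$. Lemma~\ref{lem:cover-lower} shows that no cover of size $e^{O(\sqrt{\log k})}$ does better.
\item With about $\log k$ primes per segment (weight about $(\log k)^2$), the number of admissible index sets is $\exp(\Theta(\sqrt k/\log k))$. That is far too many when each cover element costs at least a constant in log-length.
\end{itemize}
The balance that works is weight $W=3k/Q$ with $Q\asymp(\log k)^2$. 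The Chebyshev-type bound $\sum_{p^a\le x}\log p^a\le 3x$ puts every element of the $i$th segment above $ik/Q$. So the occupied index set $J$ of any admissible support satisfies $\sum_{i\in J}i\le Q$. This gives three consequences:
\begin{itemize}
\item $|J|\le\sqrt{2Q}$;
\item by the distinct-partition asymptotic there are at most $e^{2\sqrt Q}\le\sqrt k$ such $J$;
\item $\log d_J\le(1+\sqrt{2Q})(3k/Q+\log k)=O(k/\log k)$.
\end{itemize}
Also, $d_J$ must be the product of the prime powers \emph{inside} $B_0$ and the selected segments. An lcm over everything outside a small set has logarithm close to $\psi(k)$.
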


\noindent
The constant in the $\Omega$ is the reciprocal of $12\sqrt{2}$ in the proof in Section~\ref{sec:elementary-cover}.
Appendix~\ref{sec:refinement} sharpens the constant to $1$, giving 
$\Sepp_{=}(n)\ge\log n\log\log n\cdot(1-o(1))$, and gives explicit lower-order
terms, using a quantitative prime number theorem.

A \emph{positive binary identity} of $\Sym_k$ is a pair of distinct words
$u,v$ over $\{0,1\}$ satisfying
$u(\sigma_0,\sigma_1)=v(\sigma_0,\sigma_1)$ for every
$\sigma_0,\sigma_1\in\Sym_k$.  It is \emph{homogeneous} if $|u|=|v|$.  Write
$\ell_{+,2}(k)$ for the minimum common length of a homogeneous positive
binary identity.

The identity $u=v$ holds in $\Sym_k$ if and only if no permutation
automaton with at most $k$ states separates $u,v$; this correspondence
is discussed in Section~8 of~\cite{DESW}.  Embed the letter
permutations in $\Sym_k$ by adding fixed points; an identity then gives
the same endpoint from every state.  Conversely, if the evaluations differ,
start at a state with different images and accept only one image.

We prove the automata theorem by constructing short identities.

\begin{theorem}\label{thm:elementary-identity}
For every sufficiently large $k$, there are distinct positive binary
words $u_k,v_k$ with equal counts of each letter, forming an identity
of $\Sym_k$, with
\[
 |u_k|=|v_k|=\exp(O(k/\log k)).
\]
\end{theorem}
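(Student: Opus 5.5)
The plan is to reduce the construction to a \emph{cover} of the orders of permutations in $\Sym_k$ by a short list of integers, and then realize any such cover by nested positive words. The basic gadget is the one underlying the family of~\cite{KS}. If $u,v$ are positive words with equal letter counts, put
\[
 u'=(uv)^a(vu)^a,\qquad v'=(vu)^a(uv)^a .
\]
Then $u',v'$ are again positive with equal letter counts, and $|u'|=|v'|=2a(|u|+|v|)$. Since $uv$ and $vu$ are conjugate, they have the same order. Hence, under any evaluation in $\Sym_k$, the equality $u'=v'$ holds whenever $u=v$ holds or $\ord(uv)\mid a$. Starting from $u_0=0$, $v_0=1$ and iterating with integers $a_1,\dots,a_m$ gives words $u_m,v_m$. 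They have equal letter counts, and a short induction on the first letter shows they are distinct. Their length is $4^m\prod_i a_i$ up to a factor $2$, so the target bound $\exp(O(k/\log k))$ follows once $m=O(k/\log k)$ and $\sum_i\log a_i=O(k/\log k)$.

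The combinatorial input is a cover. We want $a_1,\dots,a_m$ such that every pair $(\sigma_0,\sigma_1)$ is killed at some stage. That is, at the first stage $i$ with $u_{i-1}\ne v_{i-1}$ still holding, the order of $u_{i-1}v_{i-1}$ should divide $a_i$. The order of any element of $\Sym_k$ is $\lcm$ of cycle lengths summing to at most $k$. Each prime power $p^e$ dividing the order therefore contributes at least $p^e$ to $k$. I would choose the $a_i$ as products of \emph{consecutive segments} of prime powers, each segment chosen so that its total $\log$ is $O(k/\log k)$. A permutation whose order fails to divide all the $a_i$ must then use, in each segment, a prime power outside that segment's allowed range. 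Summing these contributions exceeds $k$. Counting how many "bad" configurations survive reduces to enumerating sets of positive integers (indices of missing primes) with bounded sum. There I would use $\binom{2n}{n}\le 4^n$ and the Hardy--Ramanujan-type estimate $\log q(n)\sim\pi\sqrt{n/3}$ for partitions into distinct parts. These show that $m$ and the $\log a_i$ stay within $O(k/\log k)$.

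The main obstacle is the interaction between the cover and the nesting. The element whose order must divide $a_i$ is $u_{i-1}v_{i-1}$, which changes from stage to stage, rather than one fixed permutation. My first attempt is to make each $a_i$ a multiple of a common exponent $L_i$: take $a_i$ to be the $\lcm$ of all prime powers in segments $1,\dots,i$, with segments ordered by size. Then whatever permutation appears at stage $i$, its order divides $a_i$ unless it contains a prime power beyond segment $i$. Surviving all $m$ stages would force a prime power from each later segment to appear in a single permutation, violating the degree bound $k$. If this bookkeeping fails, the fallback is to repeat each $a_i$ in a round-robin schedule. This costs only a factor polynomial in $m$ in the exponent, which is still $O(k/\log k)$ after adjusting segment sizes.
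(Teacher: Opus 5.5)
There is a genuine gap. Your gadget is sound: if $u=v$ or $\ord(uv)\mid a$ then $u'=v'$, and the words stay distinct with equal letter counts. But the plan built on it cannot reach $\exp(O(k/\log k))$, because its length is multiplicative: $|u_m|=4^m\prod_i a_i$, so $\log|u_m|\ge\sum_i\log a_i$.

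Under your stated reduction, the $a_i$ form a cover, meaning every element order of $\Sym_k$ divides some $a_i$. Every prime power $p^e\le k$ is an element order (a $p^e$-cycle). Hence $\lcm(1,\ldots,k)$ divides $\prod_i a_i$, which forces $\sum_i\log a_i\ge\psi(k)=(1+o(1))k$. So the target $\sum_i\log a_i=O(k/\log k)$ is impossible for any cover, however the segments are chosen. Your concrete choice $a_i=\lcm(\text{segments }1,\ldots,i)$ is worse still, since already $\log a_m\approx\psi(k)$; this is no better than the trivial identity $0^{E(k)}=1^{E(k)}$.

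The bookkeeping step is also invalid. The element $u_{i-1}v_{i-1}$ whose order must divide $a_i$ is a different permutation at each stage. Surviving all stages therefore does not force a single permutation to carry prime powers from many segments. The round-robin fallback only multiplies the exponent and does not address this.

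The missing idea is the positive $d$-inverse. The paper sets $C_i=U_i^{\dagger d_{i+1}}$, $U_{i+1}=U_iC_iV_i$, and $V_{i+1}=V_iC_iU_i$. When $\sigma_0^{d_{i+1}}=\sigma_1^{d_{i+1}}=1$, $C_i$ evaluates to $U_i^{-1}$ and both words collapse to $V_i$; once $U_i=V_i$ holds, it persists.

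\begin{itemize}
\item Every block exponent of $C_i$ is reduced into $\{1,\ldots,d_{i+1}\}$, so the length is $3^{r-1}\max_i d_i$ rather than a product. The cover integers enter only through their maximum, and the cover size only through $r\log3$.
\item The kill condition concerns the fixed letters, not a changing derived element. The substitution $0\mapsto01$, $1\mapsto10$ makes the two letters conjugate, hence of the same order, so a genuine cover suffices.
\item The segment and index-set counting then bounds the \emph{number} of cover integers: one $d_J$ per index set $J$ with $\sum_{i\in J}i\le Q$. This gives $r\le e^{2\sqrt Q}$ and $\max\log d_J\le(1+\sqrt{2Q})(3k/Q+\log k)$.
\item Choosing $Q\asymp(\log k)^2$ yields $r\le\sqrt k$ and $\max\log d_J=O(k/\log k)$.
\end{itemize}
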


For all sufficiently large $n$, Theorem~\ref{thm:elementary-identity}
gives an identity of length at most $n$ for some
$k=\Theta(\log n\log\log n)$.  The distinct words
$u_k0^{n-|u_k|}$ and $v_k0^{n-|u_k|}$ have length exactly $n$ and
form an identity of $\Sym_k$.  The correspondence above therefore
gives $\Sepp(n)\ge\Sepp_{=}(n)>k$, proving
Theorem~\ref{thm:elementary-sepp}.

The proof extends the two-integer covering approach of Bulatov
et al.\ (Proposition~15 in~\cite{BKSS}) to arbitrary finite covers, using a
recursion of positive words.  We build the cover by partitioning prime powers
into consecutive segments of controlled logarithmic weight.  This reduces
the covering problem to counting sets of positive integers with bounded
sum and gives at most $\sqrt{k}$ cover integers, each at most
$\exp(O(k/\log k))$ in magnitude.  Combining these integers into an identity contributes
only $O(\sqrt{k})$ to its logarithmic length, so the largest cover integer
determines the asymptotic scale.

After the preliminary estimates, Sections~\ref{sec:selector}
and~\ref{sec:elementary-cover} prove
Theorem~\ref{thm:elementary-identity}.  Section~\ref{sec:tightness} proves
that its order of magnitude is optimal for this construction, and Section~\ref{sec:disclosure} documents AI use including efforts to push the lower bound further by other ideas.

\section{Preliminaries}

We begin with three arithmetic estimates used in the construction.

\begin{lemma}[Subset count]\label{lem:elementary-count}
As $Q\to\infty$, the number of sets of positive integers that sum to
at most $Q$ is asymptotic to
\[
 \frac{3^{1/4}}{2\pi Q^{1/4}}
 \exp\!\left(\pi\sqrt{\frac Q3}\right).
\]
\end{lemma}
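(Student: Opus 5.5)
The number of sets of positive integers with sum exactly $m$ is $q(m)$, the number of partitions of $m$ into distinct parts. The quantity to estimate is therefore the partial sum $S(Q)=\sum_{m=0}^{Q} q(m)$. I will take the classical Hardy--Ramanujan type asymptotic as given, in the form the abstract calls ``the classical asymptotic formula for partitions into distinct parts'':
\[
 q(m)\sim \frac{1}{4\cdot 3^{1/4} m^{3/4}}\exp\!\left(\pi\sqrt{\frac m3}\right).
\]
The plan is to sum this asymptotic over $m\le Q$ and show that the sum is governed by its top terms. Write $c=\pi/\sqrt3$ and $f(m)=m^{-3/4}e^{c\sqrt m}$.

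First, consecutive terms change slowly. The ratio $f(m+1)/f(m)=\exp\bigl(c/(2\sqrt m)+O(1/m)\bigr)$ tends to $1$. As a result, the mass of the sum sits in a window $Q-m=O(\sqrt Q)$, which grows, so the sum is well approximated by an integral. Substituting $m=Q-t$ and using $\sqrt{Q-t}=\sqrt Q-t/(2\sqrt Q)+O(t^2/Q^{3/2})$ gives, for $t\le Q^{2/3}$ say,
\[
 f(Q-t)=f(Q)\,e^{-ct/(2\sqrt Q)}\bigl(1+o(1)\bigr)
\]
uniformly. Hence
\[
 \sum_{0\le t\le Q^{2/3}} f(Q-t)\sim f(Q)\cdot\frac{2\sqrt Q}{c}.
\]
For the tail $t>Q^{2/3}$, I bound crudely: there are at most $Q$ terms, each at most a constant times $f(Q)\exp(-c\,Q^{1/6}/2)$ by monotonicity of $\sqrt{\cdot}$. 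The tail is therefore $o\bigl(f(Q)\sqrt Q\bigr)$. The finitely many small $m$, where the asymptotic for $q(m)$ is not yet accurate, contribute $O(1)$. For the middle range I use $q(m)=(1+o(1))f(m)/(4\cdot3^{1/4})$, and because the bulk of the sum lies at $m\to\infty$, the error factor passes through. This domination argument is the only real care point: I must make sure that the $o(1)$ errors in $q(m)$ for moderately large $m$ are negligible against the main term. Splitting at $m_0$ with $q(m)\le 2f(m)/(4\cdot 3^{1/4})$ for $m\ge m_0$ handles this.

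Finally, the constants combine as
\[
 \frac{1}{4\cdot3^{1/4}}\cdot\frac{2\sqrt3}{\pi}\cdot Q^{1/2}Q^{-3/4}e^{c\sqrt Q}
 =\frac{3^{1/4}}{2\pi Q^{1/4}}\exp\!\left(\pi\sqrt{\frac Q3}\right),
\]
using $\sqrt3/3^{1/4}=3^{1/4}$. This is the claimed asymptotic.
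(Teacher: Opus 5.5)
Your proof is correct and follows the paper's route: both reduce to the partial sum $\sum_{m\le Q}p_{\mathrm d}(m)$, insert the same asymptotic for partitions into distinct parts, and sum it. The only difference is how you evaluate $\sum f(m)$. You use a geometric approximation in the window $Q-m\le Q^{2/3}$ plus a tail bound, whereas the paper uses sum--integral comparison, the substitution $t=s^2$, and integration by parts. In your tail, the per-term bound actually needs $f$ to be increasing on $[1,\infty)$, not just monotonicity of $\sqrt{\cdot}$; even losing a factor $Q^{3/4}$ there would be harmless.
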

\begin{proof}
Let $p_{\mathrm d}(m)$ count partitions of $m$ into distinct parts,
with $p_{\mathrm d}(0)=1$, and put
$Q(n)=\sum_{m=0}^n p_{\mathrm d}(m)$.
Theorem~1.1 in~\cite{LV24} gives
\[
 p_{\mathrm d}(n)\sim
 \frac{\exp(\pi\sqrt{n/3})}{4\cdot 3^{1/4}n^{3/4}}.
\]
Sum--integral comparison, followed by $t=s^2$ and integration by parts,
gives
\[
 \begin{aligned}
 Q(n)&\sim\frac1{4\cdot 3^{1/4}}
       \int_1^n t^{-3/4}\exp(\pi\sqrt{t/3})\,dt\\
     &\sim\frac{3^{1/4}}{2\pi n^{1/4}}
       \exp(\pi\sqrt{n/3}).
 \end{aligned}
\]
\end{proof}

We next bound the logarithmic weight of prime powers.

\begin{lemma}[Weight bound]\label{lem:elementary-weight}
For any $x\ge0$, summing over primes $p$,
\[
 A(x):=\sum_{p^a\le x}\log(p^a)\le3x.
\]
\end{lemma}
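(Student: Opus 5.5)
We need $A(x)=\psi(x)\le 3x$ for all $x\ge0$, where $\psi$ is the Chebyshev function; for $x<2$ the sum is empty. I would use the classical elementary Chebyshev argument through the central binomial coefficient, which the introduction to this section already signals. The plan is to bound $\vartheta(x)=\sum_{p\le x}\log p$ first, and then add in the contribution of the higher prime powers.

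\emph{Step 1 (primorial bound).} For $m\ge1$, every prime $p$ with $m+1<p\le 2m+1$ divides $\binom{2m+1}{m}$. Since $\binom{2m+1}{m}\le 4^m$ (it appears twice in the expansion of $(1+1)^{2m+1}$), we get $\vartheta(2m+1)-\vartheta(m+1)\le m\log 4$. Induction on $n$, splitting into odd and even cases, then gives
\[
\prod_{p\le n}p\le 4^{n}, \qquad\text{that is,}\qquad \vartheta(x)\le x\log 4 .
\]

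\emph{Step 2 (prime powers).} Write
\[
\psi(x)=\sum_{a\ge1}\vartheta(x^{1/a})=\vartheta(x)+\sum_{a\ge2}\vartheta(x^{1/a}).
\]
Only the terms with $a\le \log_2 x$ are nonzero. Using $\vartheta(y)\le y\log 4$, the terms with $a\ge2$ contribute at most
\[
\log 4\,\Bigl(\sqrt{x}+\log_2 x\cdot x^{1/3}\Bigr).
\]
This gives
\[
\psi(x)\le 1.3863\,x+O\bigl(\sqrt x\bigr),
\]
which is at most $3x$ once $x$ exceeds a modest threshold. A crude computation places that threshold at a few dozen.

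\emph{Step 3 (small $x$).} For small $x$, I would check directly, or use the simpler bound $\psi(x)\le \pi(x)\log x$. Alternatively, avoid the case split by an exact bound: since $\vartheta(x^{1/a})\le \vartheta(\sqrt x)$ for every $a\ge2$, and at most $\log_2 x$ values of $a$ contribute,
\[
\psi(x)\le \vartheta(x)+\log_2 x\cdot\vartheta(\sqrt x)\le \log 4\,\bigl(x+\sqrt x\,\log_2 x\bigr).
\]
Since $\sqrt x\log_2 x\le 1.16\,x$ for all $x\ge1$ (the maximum of $\log_2 x/\sqrt x$ is $2/(e\log 2)\approx1.06$), this yields
\[
\psi(x)\le \log 4\cdot 2.07\,x\approx 2.87\,x\le 3x .
\]
This proves the lemma for all $x\ge1$. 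For $0\le x<2$, $A(x)=0$.

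\emph{Main obstacle.} The only real concern is keeping the constant below $3$ uniformly in $x$, with no asymptotic slack. The last inequality above, bounding $\sqrt x\log_2 x$ by a fixed multiple of $x$, is the step to get exactly right. If that margin proved too tight, I would fall back on the sharper estimate $\sum_{a\ge2}\vartheta(x^{1/a})\le \psi(\sqrt x)\log_2 x$-type bounds together with a finite check for small $x$.
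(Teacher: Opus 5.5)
\textbf{There is a genuine gap: you prove a bound for $\psi(x)$, but the lemma is about a larger quantity.} In the lemma, $A(x)=\sum_{p^a\le x}\log(p^a)$ gives each prime power its full weight $a\log p$, so
\[
 A(x)=\sum_{a\ge1}a\,\vartheta(x^{1/a}),\qquad
 \psi(x)=\sum_{a\ge1}\vartheta(x^{1/a}).
\]
Hence $A(x)>\psi(x)$ for every $x\ge4$; for instance $A(4)=\log24$ but $\psi(4)=\log12$. The paper flags this in Appendix~\ref{sec:refinement}. The distinction matters, because Section~\ref{sec:elementary-cover} applies the lemma to $\sum\log s$ over distinct prime powers $s\in\cP_k$.

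So your opening identification $A(x)=\psi(x)$ is false. Steps~2--3 correctly give $\psi(x)\le2.86x$, but that does not prove the statement. The margin also does not survive the correction:
\begin{itemize}
\item[$\bullet$] Your uniform bound becomes $A(x)\le\vartheta(x)+\vartheta(\sqrt x)\sum_{2\le a\le\log_2x}a$. With $\vartheta(\sqrt x)\le\sqrt x\log4$, this is about $4.85x$ at $x=64$.
\item[$\bullet$] Inserting $\psi(y)\le2.86y$ into the dyadic reduction below gives only $5.72x$.
\end{itemize}
The real obstacle is the extra factor $a$, not the constant in the $\sqrt x\log_2x$ term.

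\textbf{What is missing.} You need (i) a bound with constant $\log4$ for $\psi$ itself, not just for $\vartheta$, and (ii) a way to absorb the factor $a$.

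For (i), every prime power $p^i\in(m,2m]$ contributes $\lfloor2m/p^i\rfloor-2\lfloor m/p^i\rfloor=1$ to the exponent of $p$ in $\binom{2m}{m}$. Hence $\psi(2m)-\psi(m)\le\log\binom{2m}{m}\le2m\log2$. Likewise $\psi(2m+1)-\psi(m+1)\le\log\binom{2m+1}{m}\le2m\log2$, and induction gives $\psi(x)\le x\log4$.

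For (ii), write $\log(p^a)=\sum_{b=1}^a\log p$ and send $(p,a,b)$ to $(j,p^b)$ with $j=a-b$. Since $2^jp^b\le p^a\le x$, this is an injection into pairs with $p^b\le x/2^j$. Therefore
\[
 A(x)\le\sum_{j\ge0}\psi(x/2^j)\le4x\log2<3x .
\]
This is the paper's proof.

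Alternatively, you could keep your $\vartheta$ route. Bound $\sum_{a\ge2}a\,\vartheta(x^{1/a})$ termwise by $\log4\sum_{2\le a\le\log_2x}a\,x^{1/a}$ for large $x$, and check small $x$ by direct computation. As written, however, your proposal contains neither the factor $a$ nor such a check.
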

\begin{proof}
Put $\psi(x)=\sum_{p^a\le x}\log p$.  The prime exponents in the
central binomial coefficients give, for every integer $m\ge1$,
\[
 \begin{aligned}
 \psi(2m)-\psi(m)
 &\le\log\binom{2m}{m}\le2m\log2,\\
 \psi(2m+1)-\psi(m+1)
 &\le\log\binom{2m+1}{m}\le2m\log2.
 \end{aligned}
\]
For the last inequality, the two central terms in the binomial expansion
of $2^{2m+1}$ are equal.  Induction from $\psi(1)=0$ gives
$\psi(n)\le2n\log2$ for all integers $n\ge1$, hence
$\psi(x)\le2x\log2$ for every real $x\ge0$.

For $1\le b\le a$, set $j=a-b$.  Since $p\ge2$,
$2^jp^b\le p^{b+j}=p^a$.  Therefore
\[
 \begin{aligned}
 A(x)&=\sum_{p^a\le x}\sum_{b=1}^a\log p
       \le\sum_{j\ge0}\psi(x/2^j)\\
     &\le2x\log2\sum_{j\ge0}2^{-j}
       =4x\log2\le3x.
 \end{aligned}
\]
\end{proof}

Let $q=p_1^{a_1}\cdots p_t^{a_t}$ be the unique prime factorization of
$q$, with distinct primes $p_i$ and positive integers $a_i$.  Define
\[
 S(q)=\{p_1^{a_1},\ldots,p_t^{a_t}\}.
\]
Thus $S(q)$ is the set of maximal prime-power divisors of $q$;
for instance, $S(12)=\{3,4\}$.  For $q=1$, $S(1)=\varnothing$.
Empty sums and products equal zero and one, respectively.
The following bound is due to Erd\H{o}s and
Tur\'an (Section~7 in~\cite{ErdosTuran}).

\begin{lemma}[Support cost]\label{lem:support}
If $q$ is an element order of $\Sym_k$, then
\[
 q=\prod_{s\in S(q)}s,\qquad \sum_{s\in S(q)}s\le k.
\]
\end{lemma}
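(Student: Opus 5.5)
The equality $q=\prod_{s\in S(q)}s$ holds for every positive integer $q$. It is just the unique prime factorization, regrouped so that each distinct prime $p_i$ contributes the single factor $p_i^{a_i}$. For $q=1$ both sides equal one by the empty-product convention. So all the content lies in the inequality $\sum_{s\in S(q)}s\le k$, and for that I would work from the cycle structure of a permutation.

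Let $\sigma\in\Sym_k$ have order $q$, and let its cycle lengths be $c_1,\dots,c_r$. The order of a permutation is the least common multiple of its cycle lengths, so $q=\lcm(c_1,\dots,c_r)$. The cycles partition the $k$ points, so $c_1+\cdots+c_r=k$; fixed points count as cycles of length one.

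Now fix $s=p^a\in S(q)$. The $p$-adic valuation of an lcm is the maximum of the $p$-adic valuations of its arguments. Hence some cycle length $c_{j(s)}$ satisfies $p^a\mid c_{j(s)}$, and in particular $s\le c_{j(s)}$. This gives a map $j:S(q)\to\{1,\dots,r\}$.

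The only point needing care is that two different elements of $S(q)$ may be assigned to the same cycle. So I would not bound each $s$ by its own cycle. Instead, group the elements of $S(q)$ by their image under $j$. Fix a cycle index $i$ and let $s_1,\dots,s_m$ be the elements of $S(q)$ assigned to it. These are powers of distinct primes, so they are pairwise coprime, and each divides $c_i$. Therefore their product divides $c_i$, which gives
\[
 \sum_{t=1}^m s_t\le\prod_{t=1}^m s_t\le c_i .
\]
The first inequality holds because $x+y\le xy$ whenever $x,y\ge2$; apply it by induction on $m$, and note the case $m=0$ is trivial. Summing over all cycles then gives
\[
 \sum_{s\in S(q)}s\le\sum_{i=1}^r c_i=k,
\]
which completes the plan.
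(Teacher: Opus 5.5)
Your proof is correct and is essentially the paper's argument. The paper notes $S(q)\subseteq\bigcup_i S(c_i)$, bounds $\sum_{s\in S(c_i)}s\le\prod_{s\in S(c_i)}s=c_i$ via $a+b\le ab$, and sums over cycles; you instead assign each prime power to a single cycle and use pairwise coprimality to reach the same per-cycle bound, with the factorization identity handled the same way.
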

\begin{proof}
Let $c_1,\ldots,c_t$ be the cycle lengths of a permutation of order $q$.
Then
\[
 q=\lcm(c_1,\ldots,c_t)\qquad\text{and}\qquad
 \sum_{i=1}^t c_i=k.
\]
Thus $S(q)\subseteq\bigcup_{i=1}^t S(c_i)$.
Repeatedly applying $a+b\le ab$ for $a,b\ge2$ gives
\[
 \sum_{s\in S(c_i)}s\le\prod_{s\in S(c_i)}s=c_i
 \qquad(1\le i\le t),
\]
including $c_i=1$ under the empty-set conventions. Therefore
\[
 \sum_{s\in S(q)}s
 \le\sum_{i=1}^t\sum_{s\in S(c_i)}s
 \le\sum_{i=1}^t c_i=k.
\]
Finally, prime factorization gives
$q=\prod_{s\in S(q)}s$.
\end{proof}

\section{Positive identities and covers}
\label{sec:selector}

We first record two reductions for constructing identities.
Assume $u$ and $v$ are distinct equal-length words.

\begin{lemma}\label{tricks}
\begin{itemize}
\item[(a)]
Without loss of generality, $u$ and $v$ begin and end with different letters.
\item[(b)]
Without loss of generality (up to constant factors), we can restrict attention to pairs $\sigma_0,\sigma_1$ of the same order.
\end{itemize}
\end{lemma}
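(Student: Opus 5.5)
For part (a) I would work with the longest common prefix and suffix. Let $u=xu'y$ and $v=xv'y$, where $x$ is the longest common prefix and $y$ is the longest common suffix. Because $|u|=|v|$ and $u\ne v$, the prefix and suffix cannot overlap and $u',v'$ are nonempty. Each evaluation is a product in a group, so $u(\sigma_0,\sigma_1)=v(\sigma_0,\sigma_1)$ holds if and only if $u'(\sigma_0,\sigma_1)=v'(\sigma_0,\sigma_1)$; we simply cancel $x$ on the left and $y$ on the right. Hence $u=v$ is an identity of $\Sym_k$ exactly when $u'=v'$ is. The new words still have equal length, and by maximality of $x$ and $y$ they begin with different letters and end with different letters. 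In the other direction, any identity $u'=v'$ can be padded back to $xu'y=xv'y$ without losing the identity property. So nothing is lost by assuming the first and last letters differ, and padding to an exact length (as in the deduction of Theorem~\ref{thm:elementary-sepp}) works the same way.

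For part (b) I would read ``up to constant factors'' as a statement about $k$. The claim is that if $u=v$ holds for all pairs $\sigma_0,\sigma_1\in\Sym_{2k}$ of equal order, then $u=v$ holds for all pairs in $\Sym_k$. Given arbitrary $\sigma_0,\sigma_1\in\Sym_k$, embed them into $\Sym_{2k}$ acting on $\{1,\dots,k\}\sqcup\{1',\dots,k'\}$. Define $\tau_0=\sigma_0\times\sigma_1'$ and $\tau_1=\sigma_1\times\sigma_0'$, where the primes denote copies acting on the second block. Each $\tau_i$ has order $\lcm(\ord\sigma_0,\ord\sigma_1)$, so the two orders agree. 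Both blocks are invariant, and on the first block the words evaluate as $u(\sigma_0,\sigma_1)$ and $v(\sigma_0,\sigma_1)$. Therefore equality of $u(\tau_0,\tau_1)$ and $v(\tau_0,\tau_1)$ forces equality on the first block, which is exactly the equality we need in $\Sym_k$.

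Consequently, building identities that only need to hold for equal-order pairs in $\Sym_{2k}$ yields genuine identities of $\Sym_k$. The factor $2$ only changes the constant inside $\exp(O(k/\log k))$. The same applies in the automaton picture: a separating automaton with $k$ states becomes one with $2k$ states whose two letters have equal order.

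The main obstacle is the bookkeeping in (b). One has to be sure that the equal-order reduction really composes with the later cover argument, which will presumably choose a common order $q$ and its support $S(q)$ via Lemma~\ref{lem:support} with budget $2k$ rather than $k$. The doubling of $k$ must be absorbed into the $O(\cdot)$ there. Part (a) is routine cancellation.
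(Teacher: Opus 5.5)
Your proof is correct. Part (a) is the paper's argument done in one step: the paper strips a common first or last letter repeatedly, while you cancel the maximal common prefix and suffix at once.

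Part (b) takes a genuinely different route, and it reads ``up to constant factors'' differently. The paper keeps $k$ fixed and alters the words. Under the substitution $0\mapsto01$, $1\mapsto10$ one has $u'(x,y)=u(xy,yx)$. Since $xy$ and $yx$ are conjugate, an equation $u=v$ that holds for all equal-order pairs of $\Sym_k$ becomes a genuine identity $u'=v'$ of $\Sym_k$ of twice the length. You keep the words and alter $k$: your pair $\tau_0=\sigma_0\times\sigma_1'$, $\tau_1=\sigma_1\times\sigma_0'$ shows that equal-order identities of $\Sym_{2k}$ are identities of $\Sym_k$. This pair is even conjugate, via the block swap.

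The costs differ. The paper pays only an additive $\log 2$ in $\log|u|$, which is negligible even for the leading constant $1$ in Theorem~\ref{thm:identity}. Its substitution is also exactly the one reused in Lemma~\ref{lem:cover-identity}, and it balances the letter counts for free. Your route pays a factor $2$ in $k$. That is harmless for $\exp(O(k/\log k))$ and for Theorem~\ref{thm:elementary-sepp}. However, the cover must then handle the orders $\lcm(\ord\sigma_0,\ord\sigma_1)$, whose support cost can reach $2k$. Feeding this into the paper's cover constructions would roughly double the leading term $k/\log k$ and roughly halve the constant in Theorem~\ref{thm:sepp}. In exchange, your reduction leaves the words untouched and has a transparent automaton form: a $k$-state separator becomes a $2k$-state separator whose two letters have equal order.
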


\begin{proof}
(a)~If they begin with the same letter or end with the same letter, then deleting the letter leaves shorter strings that also give a global identity.
(b)~Define $u'$ and $v'$ as the images of $u$ and $v$ under the length-doubling homomorphism $0 \mapsto 01$, $1 \mapsto 10$.  Then 
\[
u(xy,yx) = u'(x,y) \qquad\text{and}\qquad v(xy,yx) = v'(x,y).
\]
Thus $u' = v'$ is an identity if and only if $u(\sigma_0,\sigma_1) = v(\sigma_0,\sigma_1)$ for all cases of the form $\sigma_0 = xy$, $\sigma_1 = yx$.  Since $xy$ and $yx$ always have the same order, (b) follows.
\end{proof}

\medskip
Call $u=v$ in (b) a {\bf same-order identity}.  Our constructions will avail the property that both $\ord(\sigma_0)$ and $\ord(\sigma_1)$ divide the same $d_i$ when one of them does.  This motivates the key concept:

\medskip
\begin{definition}\label{def:cover}
A \emph{cover} of $\Sym_k$ is a finite set $\mathcal C_k$ of positive
integers such that every element order of $\Sym_k$ divides some
$d\in\mathcal C_k$.
\end{definition}

\noindent
An integer in a cover need not itself be a permutation order.  Now we turn to strings from which we will compose group identities.  A \emph{block} of a word is a maximal consecutive sequence of equal letters.

\begin{definition}
For a positive binary word $W$ and an integer $d\ge1$, define its
\emph{positive $d$-inverse} $W^{\dagger d}$ by reversing the order of
its blocks and replacing each exponent $a$ by the unique integer
in $\{1,\ldots,d\}$ congruent to $-a$ modulo $d$.
\end{definition}

For example, the positive $5$-inverse of $0^7 1^5$ is $1^5 0^3$.
When $\sigma_0^d=\sigma_1^d=1$, this word evaluates to
$W(\sigma_0,\sigma_1)^{-1}$.
The replacements are positive, including $d$ for multiples of $d$,
so the number of blocks is unchanged and the new length is at most $d$ times
that count.

Throughout this section, write $r=|\mathcal C_k|$ for the size
of the cover.

\begin{lemma}[Word construction]\label{lem:selector}
Let $\mathcal C_k$ be a cover.
There are distinct positive binary words $U_r,V_r$ such that, for
$\sigma_0,\sigma_1\in\Sym_k$,
\[
 \sigma_0^d=\sigma_1^d=1\text{ for some }d\in\mathcal C_k
 \quad\Longrightarrow\quad
 U_r(\sigma_0,\sigma_1)=V_r(\sigma_0,\sigma_1).
\]
\end{lemma}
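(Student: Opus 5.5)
We build $U_j,V_j$ by recursion on $j=0,1,\dots,r$, after listing the cover as $\mathcal C_k=\{d_1,\dots,d_r\}$. The invariant is that $U_j\ne V_j$ are positive words with $U_j(\sigma_0,\sigma_1)=V_j(\sigma_0,\sigma_1)$ whenever $\sigma_0^{d_i}=\sigma_1^{d_i}=1$ for some $i\le j$. The index $j=0$ carries no hypothesis, so $U_0,V_0$ need only be distinct. We want $U_0,V_0$ to have equal length and to begin and end with different letters, so we take $U_0=01$, $V_0=10$. The model is the two-integer construction of Proposition~15 in~\cite{BKSS}: a commutator of two words, each of which is already trivial under one of the hypotheses.

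For the recursive step, put $d=d_{j+1}$. Let $W=U_j$ and $X=V_j^{\dagger d}$. When $\sigma_0^d=\sigma_1^d=1$, the word $X$ evaluates to $V_j^{-1}$, so $WX$ evaluates to $U_jV_j^{-1}$. This element is $1$ under the earlier hypotheses, but only when $\sigma_0^d=\sigma_1^d=1$ also holds. To remove that restriction, we use the second positive inverse, which is trivial by construction under the new hypothesis. Concretely, let $A=0^d$. The word $A$ evaluates to $1$ whenever $\sigma_0^d=1$, and we combine it with the old identity in commutator form. Set
\[
 U_{j+1}=P\,Q,\qquad V_{j+1}=Q\,P,
\]
with $P=U_j$ and $Q=0^{d}\,V_j\,0^{d}$, or a similar choice. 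We check the two cases separately.

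\emph{Old hypothesis.} Here $U_j=V_j$ as group elements, and we must arrange that $P$ and $Q$ commute. The cleaner route is to choose $Q$ itself trivial under the new hypothesis, for instance
\[
 Q = U_j\,V_j^{\dagger d}\quad\text{(evaluates to }1\text{ when }\sigma_0^d=\sigma_1^d=1\text{, since }V_j=U_j\text{ is not needed)},
\]
and $P$ trivial under the old hypotheses, for instance $P=U_j\,(V_j)^{\dagger D}$. The problem is that the old hypotheses involve several different $d_i$, so no single positive inverse exponent is valid. This is where I would change the construction. Under any $d_i$-hypothesis with $i\le j$, the inverse of $V_j$ can be realised positively as $V_j^{\dagger L}$ with $L=\lcm(d_1,\dots,d_j)$, because $\sigma^{d_i}=1$ implies $\sigma^L=1$. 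So $P=U_jV_j^{\dagger L}$ evaluates to $1$ under every old hypothesis, and $Q=0^{d}$ (or $U_jV_j^{\dagger d}$) evaluates to $1$ under the new one. Then $PQ=QP$ in both cases, because one of the two factors is trivial.

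\emph{Distinctness.} This is the main obstacle. We must show $PQ\ne QP$ as words, and the natural choice $Q=0^d$ can fail because $P$ may itself be a power of $0$ or start with $0$s in a commuting pattern. By Lyndon--Schützenberger, $PQ=QP$ as words iff $P,Q$ are powers of a common word. So it suffices to choose $Q=1^{d}0^{d}$, which is trivial when $\sigma_0^d=\sigma_1^d=1$. We then check that $P$ is not a power of a common root with $Q$. If necessary, we replace $P$ by $P0$ (harmless under no hypothesis, so we instead prepend a letter to both sides), or we invoke Lemma~\ref{tricks}(a) to ensure $P$ begins with a different letter than $Q$. Since $U_j\ne V_j$ begin with different letters, $P=U_jV_j^{\dagger L}$ is not a power of $0$ or of $1$. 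Its primitive root is not $1^a0^b$-shaped with the right block structure, because $P$ has at least four blocks. This gives distinctness.

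For the final counts, $U_{j+1}$ and $V_{j+1}$ are positive words of equal length with equal letter counts, since they are rearrangements of the same two factors. Taking $(U_r,V_r)$ completes the proof.
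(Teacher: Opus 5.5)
Your final construction, $P=U_jV_j^{\dagger L}$ with $L=\lcm(d_1,\dots,d_j)$, $Q=1^{d}0^{d}$, $U_{j+1}=PQ$, $V_{j+1}=QP$, does give the implication. Under an old hypothesis $\sigma_0^L=\sigma_1^L=1$, so $P$ evaluates to $U_jV_j^{-1}=1$; under the new one, $Q$ evaluates to $1$.

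Two steps need repair. First, the alternative $Q=U_jV_j^{\dagger d}$ that you mention twice is wrong: under $\sigma_0^d=\sigma_1^d=1$ alone it evaluates to $U_jV_j^{-1}$, which need not be $1$. Second, your distinctness argument fails as written. The word $(1^d0^d)^m$ has $2m$ blocks, so ``$P$ has at least four blocks'' does not stop $P$ from being a power of $Q$. The correct argument is the paper's one-liner. By induction $U_j$ begins with $0$: this holds for $U_0=01$, and $U_{j+1}=PQ$ starts with the first letter of $U_j$. Hence $U_{j+1}$ begins with $0$ while $V_{j+1}=QP$ begins with $1$. Neither Lyndon--Sch\"utzenberger nor Lemma~\ref{tricks}(a) is needed.

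With these repairs the statement as literally written is proved, but by a route that discards what the lemma is for. Each block exponent of $V_j$ plus its replacement in $V_j^{\dagger L}$ is a positive multiple of $L$, so $|U_{j+1}|\ge|V_j|+|V_j^{\dagger L}|\ge L$. Hence $|U_r|\ge\lcm(d_1,\dots,d_{r-1})\ge E(k)/d_r$, where $E(k)=\lcm(1,\dots,k)$; this holds because each $m\le k$, the order of an $m$-cycle, divides some cover element. For the covers of Section~\ref{sec:elementary-cover} this gives $\log|U_r|\ge(1-o(1))k$, which is the scale of the trivial identity $0^{E(k)}=1^{E(k)}$. In fact, once an lcm is allowed, $U=0^{\lcm(\mathcal C_k)}$ and $V=1^{\lcm(\mathcal C_k)}$ prove the statement in one line.

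The paper's recursion avoids this. It sets $C_i=U_i^{\dagger d_{i+1}}$, $U_{i+1}=U_iC_iV_i$, $V_{i+1}=V_iC_iU_i$. The old hypotheses are used through the \emph{equality} $U_i=V_i$, which makes both sides agree whatever $C_i$ evaluates to. They are not used through a factor that must be \emph{trivial} under every earlier $d_i$. Under the new hypothesis, $C_i=U_i^{-1}$ and both sides collapse to $V_i$. So only $d_{i+1}$ enters each step, which is exactly what yields $|U_r|\le3^{r-1}\max\mathcal C_k$ in Lemma~\ref{lem:selector-length}. Your commutator scheme instead needs one positive inverse valid for all earlier exponents, and that is what forces the lcm.
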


\begin{proof}
Enumerate $\mathcal C_k=\{d_1,\ldots,d_r\}$, not necessarily in increasing order.
Start with $U_1=0^{d_1}$ and $V_1=1^{d_1}$.  For $1\le i<r$, define
\begin{equation}\label{eq:selector-recursion}
 C_i=U_i^{\dagger d_{i+1}},\qquad
 U_{i+1}=U_iC_iV_i,\qquad V_{i+1}=V_iC_iU_i.
\end{equation}
Fix $\sigma_0,\sigma_1\in\Sym_k$ and evaluate the words by
$0\mapsto\sigma_0$, $1\mapsto\sigma_1$.
If $\sigma_0^{d_1}=\sigma_1^{d_1}=1$, then $U_1=V_1=1$.
If $U_i=V_i$, then
\[
 U_{i+1}=U_iC_iV_i=U_iC_iU_i=V_iC_iU_i=V_{i+1}.
\]
If $\sigma_0^{d_{i+1}}=\sigma_1^{d_{i+1}}=1$, the positive inverse gives
$C_i=U_i^{-1}$, so
\[
 \begin{aligned}
 U_{i+1}&=U_iU_i^{-1}V_i=V_i,\\
 V_{i+1}&=V_iU_i^{-1}U_i=V_i.
 \end{aligned}
\]
Induction proves the stated implication.

As formal words, $U_i$ and $V_i$ begin with $0$ and $1$, respectively,
so they are distinct.
\end{proof}

\begin{lemma}[Length bound]\label{lem:selector-length}
Let $U_i,V_i$ be the words constructed in Lemma~\ref{lem:selector}.
For $1\le i\le r$, each word has at most
$3^{i-1}$ blocks, and
\[
 |U_i|=|V_i|\le3^{i-1}\max\mathcal C_k.
\]
\end{lemma}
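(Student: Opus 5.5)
We argue by induction on $i$, tracking three quantities at once: the number of blocks of $U_i$ and of $V_i$, the equality $|U_i|=|V_i|$, and the bound on the common length. Throughout, write $D=\max\mathcal C_k$. For $i=1$ everything is immediate: $U_1=0^{d_1}$ and $V_1=1^{d_1}$ each consist of exactly one block, they have the same length $d_1$, and $d_1\le D=3^{0}D$.

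For the inductive step, assume that $U_i$ and $V_i$ each have at most $3^{i-1}$ blocks. By the remark after the definition of positive inverse, $C_i=U_i^{\dagger d_{i+1}}$ has exactly as many blocks as $U_i$, hence at most $3^{i-1}$. The word $U_{i+1}=U_iC_iV_i$ is a concatenation of three pieces. Concatenation can only merge blocks at the two junctions, never split them, so the number of blocks is subadditive. Hence $U_{i+1}$ has at most $3\cdot3^{i-1}=3^i$ blocks, and the same argument applies to $V_{i+1}=V_iC_iU_i$.

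For the lengths, $|U_{i+1}|=|U_i|+|C_i|+|V_i|=|V_{i+1}|$, since the two words are built from the same three pieces in a different order. This gives equality of lengths at every level. We do not, however, carry a length bound through the induction. That would give a recursion such as $L_{i+1}\le 2L_i+3^{i-1}D$, which produces a worse constant than the one claimed. Instead we bound every block exponent directly, using the following invariant:
\begin{quote}
\emph{every block exponent of $U_i$ and $V_i$ is at most $D$.}
\end{quote}
Once this holds, the length is at most (number of blocks) $\times D\le 3^{i-1}D$, which is exactly the claimed bound.

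The main obstacle is proving the invariant, because merging at junctions can in principle create exponents up to about $3D$. So we must rule out merging altogether by showing that adjacent pieces meet with different letters. This rests on tracking the first and last letters of each word. We show by induction that $U_i$ begins with $0$, $V_i$ begins with $1$, and $U_i,V_i$ end with $1$ and $0$ respectively for $i\ge2$, while $U_1=0^{d_1}$ and $V_1=1^{d_1}$ end in $0$ and $1$. Since $C_i$ reverses the blocks of $U_i$, it begins with the last letter of $U_i$ and ends with its first letter, which is $0$.
\begin{itemize}
\item At the junction $U_i\mid C_i$: the last letter of $U_i$ equals the first letter of $C_i$. So these two blocks do merge, into a block of length at most $D+D$. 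This means the invariant fails as stated, and we must weaken it.
\item At the junction $C_i\mid V_i$: $C_i$ ends with $0$ and $V_i$ begins with $1$, so there is no merge.
\end{itemize}
To repair the invariant, note that the boundary block of $C_i$ adjacent to $U_i$ has exponent $d_{i+1}-(a\bmod d_{i+1})$, or $d_{i+1}$, where $a$ is the last exponent of $U_i$. If $a\le d_{i+1}$, the merged block has length at most $\max(a,d_{i+1})$ plus a complement that sums to a multiple of $d_{i+1}$, which is at most $2D$, and this still breaks the bound. So the fallback is to count blocks more carefully. Each merge reduces the block count by one, and the correct block count then satisfies $b_{i+1}\le 3b_i-1$. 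This leaves room to absorb the merged blocks with exponents up to $2D$ while keeping the total length at most $3^{i-1}D$. Concretely, we would prove $|U_i|\le 3^{i-1}D$ directly by induction from $|C_i|\le b_iD$, where $b_i$ is the block count, together with a length bound $|U_i|\le b_iD$ that is valid whenever block exponents of $U_i$ are at most $D$ except at merged junctions. We expect this bookkeeping to be the delicate step.
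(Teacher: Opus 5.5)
Your block-count argument is correct and matches the paper: $C_i$ has exactly as many blocks as $U_i$, and concatenation can only merge blocks. Your proof of $|U_{i+1}|=|V_{i+1}|$ is also correct.

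The gap is in the length bound, and it comes from an arithmetic slip. You reject the direct induction because $L_{i+1}\le 2L_i+3^{i-1}D$ supposedly gives a worse constant. With the hypothesis $L_i\le 3^{i-1}D$, however, it gives $L_{i+1}\le 2\cdot 3^{i-1}D+3^{i-1}D=3^iD$, which is exactly the claim. This is the paper's proof:
\begin{itemize}
\item Every block exponent of $C_i$ lies in $\{1,\ldots,d_{i+1}\}$, so $|C_i|\le 3^{i-1}d_{i+1}\le 3^{i-1}D$.
\item $|U_{i+1}|=|U_i|+|C_i|+|V_i|\le 3^iD$, because length is additive under concatenation whether or not blocks merge at the junctions.
\end{itemize}
The only per-block bound ever needed is on $C_i$, where it holds by construction.

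The route you take instead does not close.
\begin{itemize}
\item The invariant that all exponents of $U_i,V_i$ are at most $D$ is false, as you notice.
\item The fallback is left as unfinished ``bookkeeping'', and it rests on an auxiliary bound $|U_i|\le b_iD$ that is also false. Take $d_1=D=6$ and $d_2=4$. Then $C_1=0^2$ and $U_2=0^81^6$, whose length is $14>b_2D=12$ (though $14\le 3D$ as the lemma asserts).
\end{itemize}
As written, the proposal proves the block count but not the length bound.

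Two smaller errors:
\begin{itemize}
\item $U_{i+1}$ ends with the last letter of $V_i$, so the final letters alternate with $i$. For example, $V_2=1^60^8$ and hence $U_3$ ends in $0$, contrary to your claim that $U_i$ ends in $1$ for all $i\ge 2$.
\item $V_{i+1}=V_iC_iU_i$ also merges blocks, at the junction of $C_i$ and $U_i$, since both letters there are $0$.
\end{itemize}
Neither matters once you use the additive length recursion.
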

\begin{proof}
We induct on $i$.  The case $i=1$ follows from
$U_1=0^{d_1}$ and $V_1=1^{d_1}$.  At the next step, $C_i$ has at most
$3^{i-1}$ blocks, each of length at most
$d_{i+1}\le\max\mathcal C_k$.  Hence
\[
 \begin{aligned}
 |C_i|&\le3^{i-1}\max\mathcal C_k,\\
 |U_{i+1}|=|V_{i+1}|
 &=|U_i|+|C_i|+|V_i|\le3^i\max\mathcal C_k.
 \end{aligned}
\]
Each concatenation in~\eqref{eq:selector-recursion} has at most
$3\cdot3^{i-1}=3^i$ blocks, completing the induction.
\end{proof}

\begin{lemma}[Binary identity]\label{lem:cover-identity}
Every cover $\mathcal C_k$ gives a positive binary identity $u=v$
of $\Sym_k$ with equal counts of each letter and
\begin{equation}\label{eq:selector-upper}
 \log|u|\le\log2+\bigl(|\mathcal C_k|-1\bigr)\log3
             +\max_{d\in\mathcal C_k}\log d.
\end{equation}
\end{lemma}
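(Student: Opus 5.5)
The plan is to combine Lemma~\ref{lem:selector} with the length-doubling homomorphism of Lemma~\ref{tricks}(b). Let $U_r,V_r$ be the words of Lemma~\ref{lem:selector} for the cover $\mathcal C_k$, with $r=|\mathcal C_k|$. Set $u=U_r'$ and $v=V_r'$, the images of $U_r$ and $V_r$ under $0\mapsto01$, $1\mapsto10$. Each letter of $U_r$ or $V_r$ becomes one $0$ and one $1$. Hence $u$ and $v$ have equal length and equal counts of each letter, namely $|U_r|$ of each. They are distinct because the homomorphism is injective: it has a prefix code, and $U_r\ne V_r$.

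Next I check the identity. Fix arbitrary $x,y\in\Sym_k$ and put $\sigma_0=xy$, $\sigma_1=yx$. These are conjugate, since $yx=x^{-1}(xy)x$, so they have the same order $q$. This $q$ is an element order of $\Sym_k$, and by Definition~\ref{def:cover} it divides some $d\in\mathcal C_k$. Therefore $\sigma_0^d=\sigma_1^d=1$. Lemma~\ref{lem:selector} then gives $U_r(\sigma_0,\sigma_1)=V_r(\sigma_0,\sigma_1)$. By the substitution identity $u(x,y)=U_r(xy,yx)$ from the proof of Lemma~\ref{tricks}(b), this yields $u(x,y)=v(x,y)$ for all $x,y$. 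So $u=v$ is a positive binary identity of $\Sym_k$.

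For the length, Lemma~\ref{lem:selector-length} gives $|U_r|\le3^{r-1}\max\mathcal C_k$. Hence $|u|=2|U_r|\le2\cdot3^{r-1}\max\mathcal C_k$. Taking logarithms gives~\eqref{eq:selector-upper}.

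The only delicate point is the conjugacy step, which guarantees that both letters satisfy $\sigma^d=1$ for the \emph{same} $d$. Lemma~\ref{lem:selector} needs a common $d$, and a cover alone does not supply one for independent $\sigma_0,\sigma_1$. This is exactly why we pass through the doubling homomorphism rather than using $U_r,V_r$ directly.
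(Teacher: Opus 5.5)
Your proposal is correct and follows essentially the same route as the paper. You substitute $0\mapsto01$, $1\mapsto10$ into $U_r,V_r$, use the conjugacy of $\sigma_0\sigma_1$ and $\sigma_1\sigma_0$ to get a common $d\in\mathcal C_k$ killing both, apply Lemma~\ref{lem:selector}, and bound $|u|=2|U_r|$ by Lemma~\ref{lem:selector-length}. The only cosmetic difference is that you get distinctness from injectivity of the prefix-code substitution, whereas the paper simply notes that $u$ and $v$ begin with different letters.
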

\begin{proof}
Take the words $U_r,V_r$ constructed in Lemma~\ref{lem:selector}
for this cover.  Following the conjugate substitution from
Proposition~15 in~\cite{BKSS}, set $u=U_r(01,10)$ and $v=V_r(01,10)$.

For arbitrary $\sigma_0,\sigma_1\in\Sym_k$,
\[
 \sigma_1\sigma_0=\sigma_0^{-1}(\sigma_0\sigma_1)\sigma_0.
\]
Thus $\sigma_0\sigma_1$ and $\sigma_1\sigma_0$ have the same order,
which divides some $d\in\mathcal C_k$.  Lemma~\ref{lem:selector} gives
\[
 \begin{aligned}
 u(\sigma_0,\sigma_1)
 &=U_r(\sigma_0\sigma_1,\sigma_1\sigma_0)\\
 &=V_r(\sigma_0\sigma_1,\sigma_1\sigma_0)=v(\sigma_0,\sigma_1).
 \end{aligned}
\]
The words $u,v$ begin with $0,1$, respectively.  Each substitution
contributes one of each letter.  By Lemma~\ref{lem:selector-length},
\[
 |u|=|v|=2|U_r|\le2\cdot3^{r-1}\max\mathcal C_k,
\]
which gives~\eqref{eq:selector-upper}.
\end{proof}

These identities belong to Karpova and Shur's
\emph{numerical-2} class (Section~3.3 in~\cite{KS}).
For each possible common order $q$ of $01$ and $10$, some cover integer
is divisible by $q$.  The positive-inverse cancellations then reduce
both sides to the same word using only deletions of $(01)^q$ and $(10)^q$.

It remains to choose the integers $d_1,\ldots,d_r$ used in the recursion
for $U_r,V_r$.  We choose them so that every element order of $\Sym_k$
divides at least one $d_i$.  Lemma~\ref{lem:cover-identity} then ensures
that $u=U_r(01,10)$ and $v=V_r(01,10)$ form an identity.  The bound
$|u|=|v|\le2\cdot3^{r-1}\max_i d_i$ shows why we seek both a small
number of integers and a small maximum.

We form these integers as products of selected prime powers.
One product can cover several orders: for example,
\[
 12=3\cdot4\mid60,\qquad 15=3\cdot5\mid60,
 \qquad60=3\cdot4\cdot5.
\]
In general, if a selected set contains $S(q)$, which we defined as the set of maximal prime-power divisors
of $q$, its product is divisible by $q$.  Now define
\[
\cP_k = \{p^a: p^a \leq k\},
\]
where $p$ is prime and $a \geq 1$. By Lemma~\ref{lem:support}, every $S(q)$ arising from an element order
of $\Sym_k$ belongs to the finite family
\[
 \cF_k=\left\{S\subseteq\cP_k:\sum_{s\in S}s\le k\right\}
\]
(which also allows different powers of the same prime).
Choose subsets $T\subseteq\cP_k$ so that every $S\in\cF_k$ is contained
in at least one chosen $T$.  For each such $T$, include
$d=\prod_{z\in T}z$ in $\mathcal C_k$.  Then
\[
 S(q)\subseteq T
 \quad\Longrightarrow\quad
 q=\prod_{s\in S(q)}s\mid\prod_{z\in T}z=d,
\]
so these integers form a cover.  The number of chosen sets bounds the
cover size, and $\log d=\sum_{z\in T}\log z$.  Thus we need few sets $T$
with small logarithmic sums.  The sets $T$ need not themselves satisfy
$\sum_{z\in T}z\le k$.

\section{Constructing \texorpdfstring{$\mathcal C_k$}{Ck}}
\label{sec:elementary-cover}

A prime power contributes its value to the support cost and its logarithm
to $\log d$ for $d\in\mathcal C_k$.
We relate these quantities by partitioning prime powers
into segments in numerical order with approximately equal logarithmic weights.
Lemma~\ref{lem:elementary-weight} controls segment weights, while
Lemma~\ref{lem:elementary-count} bounds the number of possible occupied
index sets.

Fix integers $k\ge2$ and $Q\ge1$, and put $W=3k/Q$.  We construct a set
$\mathcal C_k$ as follows.
Partition $\cP_k$ into nonempty segments $B_0,B_1,\ldots$, with
$\max B_i<\min B_{i+1}$ for adjacent segments.  Then for every segment, the sum excluding the segment's last element satisfies
\[
\sum_{\substack{z\in B_i\\z<\max B_i}}\log z
 <W,
 \]
 while including the last element gives
 \[
 \sum_{z\in B_i}\log z
 \ge W
\]
except for the last segment.
Thus each segment ends when its sum first reaches $W$, or when no
elements remain.

For every set $J$ of positive segment indices with $\sum_{i\in J}i\le Q$,
including the empty set, form
\begin{equation}\label{eq:cover-integer}
 d_J=\left(\prod_{z\in B_0}z\right)
             \prod_{i\in J}\left(\prod_{z\in B_i}z\right).
\end{equation}
Let $\mathcal C_k$ be the set of these integers $d_J$.
Every product includes $B_0$ and all elements of each selected segment.
Figure~\ref{fig:cover-construction} illustrates one such product.

\begin{figure}[ht]
\centering
\begin{tikzpicture}[x=1cm,y=1cm,
 every node/.style={font=\small},
 segment/.style={draw=black!35,rounded corners=2pt,
                 minimum width=2.05cm,minimum height=0.75cm},
 selected/.style={segment,draw=blue!55!black,fill=blue!8}]
 \node at (5,1.35) {$k=Q=16,\qquad W=3$};
 \node at (0,0.7) {$B_0$};
 \node at (2.5,0.7) {$B_1$};
 \node at (5,0.7) {$B_2$};
 \node at (7.5,0.7) {$B_3$};
 \node at (10,0.7) {$B_4$};
 \node[segment,draw=black!60,fill=black!10] (b0) at (0,0) {$2,\ 3,\ 4$};
 \node[selected] (b1) at (2.5,0) {$\mathbf{5},\ 7$};
 \node[segment] (b2) at (5,0) {$8,\ 9$};
 \node[selected] (b3) at (7.5,0) {$\mathbf{11},\ 13$};
 \node[segment] (b4) at (10,0) {$16$};
 \node at (5,-0.85) {$J=\{1,3\},\quad 1+3\le16$};
 \node (p0) at (0,-1.8) {$(2\cdot3\cdot4)$};
 \node (p1) at (2.5,-1.8) {$(5\cdot7)$};
 \node (p3) at (7.5,-1.8) {$(11\cdot13)$};
 \node[anchor=east] at (-1.0,-1.8) {$d_{\{1,3\}}=$};
 \node at (1.25,-1.8) {$\times$};
 \node at (5,-1.8) {$\times$};
 \node[anchor=west] at (8.5,-1.8) {$=120120$};
 \draw[->,thick,black!65] (b0.south) -- (p0.north);
 \draw[->,thick,blue!55!black] (b1.south) -- (p1.north);
 \draw[->,thick,blue!55!black] (b3.south) -- (p3.north);
\end{tikzpicture}
\caption{The bold entries form $S(55)=\{5,11\}$, with $5+11=16$.
The shaded segments give $d_{\{1,3\}}=120120$, which is divisible by $55$.
The segment $B_0$ is included even though it does not meet $S(55)$;
$B_4$ is the final segment, with $\log16<3$.}
\label{fig:cover-construction}
\end{figure}

\begin{proposition}\label{prop:elementary-cover}
For integers $k\ge2$ and all sufficiently large integers $Q$, the
construction above gives a cover $\mathcal C_k$ satisfying
\begin{equation}\label{eq:deterministic-tradeoff}
 \begin{aligned}
 |\mathcal C_k|&\le e^{2\sqrt Q},\\
 \max_{d\in\mathcal C_k}\log d
 &\le(1+\sqrt{2Q})\left(\frac{3k}{Q}+\log k\right).
 \end{aligned}
\end{equation}
\end{proposition}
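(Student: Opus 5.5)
The plan is to verify three things about the construction: the cover property, the size bound, and the bound on $\max\log d$. The cover property comes from Lemma~\ref{lem:elementary-weight}, the size bound from Lemma~\ref{lem:elementary-count}, and the bound on $\max\log d$ directly from the segment rule.

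\emph{Cover property.} By the discussion at the end of Section~\ref{sec:selector}, it suffices to show that every $S\in\cF_k$ lies in $B_0\cup\bigcup_{i\in J}B_i$ for some admissible $J$. Given $S\in\cF_k$, let $J$ be the set of positive indices $i$ with $S\cap B_i\ne\varnothing$, so that $S$ is contained in the union above; it remains to show $\sum_{i\in J}i\le Q$. For $i\in J$ pick some $s_i\in S\cap B_i$; these are distinct elements of $S$.
\begin{itemize}
\item Since $i\ge1$ and $s_i\in B_i$, every element of $B_0,\ldots,B_{i-1}$ is a prime power less than $s_i$.
\item None of $B_0,\ldots,B_{i-1}$ is the final segment, so each has weight at least $W$.
\item Hence $iW\le A(s_i)\le 3s_i$ by Lemma~\ref{lem:elementary-weight}, that is, $i\le 3s_i/W=Qs_i/k$.
\end{itemize}
Summing over $i\in J$ gives $\sum_{i\in J}i\le (Q/k)\sum_{s\in S}s\le Q$, since $S\in\cF_k$. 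So $d_J\in\mathcal C_k$, and $\prod_{s\in S}s\mid d_J$. This is the step needing the most care; the only delicate points are that the non-final segments each carry weight $\ge W$, and that distinct selected segments use distinct elements of $S$.

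\emph{Size.} $|\mathcal C_k|$ is at most the number of sets $J$ of positive integers with sum at most $Q$. By Lemma~\ref{lem:elementary-count} this number is asymptotic to $\frac{3^{1/4}}{2\pi Q^{1/4}}\exp(\pi\sqrt{Q/3})$. Since $\pi/\sqrt3<1.82<2$ and the prefactor tends to $0$, this is at most $e^{2\sqrt Q}$ for all sufficiently large $Q$, uniformly in $k$.

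\emph{Magnitude.} By the stopping rule, each segment has weight less than $W$ before its last element is added. The last element is a prime power at most $k$, so every segment satisfies $\sum_{z\in B_i}\log z<W+\log k$. If $J$ has $m$ elements, then $\sum_{i\in J}i\ge m(m+1)/2$, so $m^2/2<Q$ and $m\le\sqrt{2Q}$. Thus $d_J$ is a product over at most $1+\sqrt{2Q}$ segments, counting $B_0$, and
\[
\log d_J<(1+\sqrt{2Q})\Bigl(\frac{3k}{Q}+\log k\Bigr),
\]
which is~\eqref{eq:deterministic-tradeoff}. If fewer segments exist than indices in some admissible set, those indices are simply absent, which only helps.
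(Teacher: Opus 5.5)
Your proposal is correct and follows the paper's proof essentially step for step. It uses the same bound $i\le Qs/k$ for elements $s\in B_i$, obtained from Lemma~\ref{lem:elementary-weight} applied to the $i$ preceding non-final segments, the same count of index sets via Lemma~\ref{lem:elementary-count}, and the same $|J|\le\sqrt{2Q}$ estimate combined with per-segment weight below $W+\log k$; the only cosmetic difference is that you work with a chosen representative $s_i\in S\cap B_i$ rather than $\min(S\cap B_i)$, which is immaterial.
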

\begin{proof}
Since $\max B_i\le k$, the defining inequalities give, for every segment,
\[
 \sum_{z\in B_i}\log z
 =\sum_{\substack{z\in B_i\\z<\max B_i}}\log z+\log(\max B_i)
 <W+\log k.
\]
For any $z\in B_i$, the $i$ preceding segments have total weight at
least $iW$, and all their elements are smaller than $z$.
Lemma~\ref{lem:elementary-weight} gives
\[
 iW\le\sum_{\substack{s\in\cP_k\\s<z}}\log s
     \le A(z)\le3z.
\]
Thus every element of $B_i$ is at least $iW/3=ik/Q$.

Fix $S\in\cF_k$ and let $J$ be the set of positive indices of segments
that meet $S$:
\[
 J=\{i\ge1:S\cap B_i\ne\varnothing\}.
\]
The sets $S\cap B_i$ for $i\in J$ are nonempty and disjoint, so
\begin{equation}\label{eq:group-index-budget}
 \frac{k}{Q}\sum_{i\in J}i
 \le\sum_{i\in J}\min(S\cap B_i)
 \le\sum_{z\in S}z\le k.
\end{equation}
Hence $\sum_{i\in J}i\le Q$, and this $J$ occurs in the construction.
Every element of $S$ belongs to $B_0$ or to a segment indexed by $J$.
Therefore the product in~\eqref{eq:cover-integer} contains every element
of $S$, giving $\prod_{z\in S}z\mid d_J$.
Lemma~\ref{lem:support} now shows that $\mathcal C_k$ is a cover.

Every $J$ used in the construction is a subset of $\{1,\ldots,Q\}$,
since its positive indices sum to at most $Q$.
Since $\pi/\sqrt3<2$, Lemma~\ref{lem:elementary-count} gives
$|\mathcal C_k|\le e^{2\sqrt Q}$ for all sufficiently large $Q$.
For every $J$ used in the construction,
\[
 \frac{|J|(|J|+1)}2\le\sum_{i\in J}i\le Q,
 \qquad |J|\le\sqrt{2Q}.
\]
Each $d_J$ uses $|J|+1$ segments, each with logarithmic sum less than
$W+\log k$.  Thus
\[
 \log d_J\le(1+\sqrt{2Q})(W+\log k),
\]
which gives~\eqref{eq:deterministic-tradeoff} after substituting $W=3k/Q$.
\end{proof}

The construction enumerates index sets, not permutation orders.
Numerical ordering is essential: it converts a segment's index into a
lower bound on the cost of each element in it.
For an integer-arithmetic implementation, use weights
$\lceil\log_2z\rceil\le2\log z/\log2$.  Lemma~\ref{lem:elementary-weight} gives
\[
 \sum_{p^a\le x}\lceil\log_2(p^a)\rceil
 \le\frac{2A(x)}{\log2}\le9x.
\]
Use the threshold $\lceil9k/Q\rceil$.  Each segment then has weight at most
$\lceil9k/Q\rceil+\lceil\log_2k\rceil$.

\begin{proof}[Proof of Theorem~\ref{thm:elementary-identity}]
Set $Q=\lfloor(\log k)^2/16\rfloor$.  Then $Q\to\infty$ as $k\to\infty$
and $Q=\Theta((\log k)^2)$.  For sufficiently large $k$,
Proposition~\ref{prop:elementary-cover} gives a cover $\mathcal C_k$ with
\[
 \begin{aligned}
 |\mathcal C_k|&\le e^{2\sqrt Q}\le\sqrt k,\\
 \max_{d\in\mathcal C_k}\log d
 &\le(1+\sqrt{2Q})\left(\frac{3k}{Q}+\log k\right)\\
 &=12\sqrt2\,\frac{k}{\log k}
   +O\left(\frac{k}{(\log k)^2}\right).
 \end{aligned}
\]
Lemma~\ref{lem:cover-identity} now gives distinct binary words with equal
letter counts, forming an identity of $\Sym_k$, and
\[
 \begin{aligned}
 \log|u_k|
 &\le\log2+\bigl(|\mathcal C_k|-1\bigr)\log3
       +\max_{d\in\mathcal C_k}\log d\\
 &\le12\sqrt2\,\frac{k}{\log k}
   +O\left(\frac{k}{(\log k)^2}\right).
 \end{aligned}
\]
Exponentiating gives the claimed bound.
\end{proof}

\section{Tightness of the construction}
\label{sec:tightness}

Section~\ref{sec:elementary-cover} constructs a cover with at most
$\sqrt k$ integers and $\max_{d\in\mathcal C_k}\log d=O(k/\log k)$.
The following lemma gives a matching lower bound when $\ell=\sqrt k$
and describes its dependence on the allowed cover size $\ell$ for
$2\le\ell\le k$.

\begin{lemma}[Cover bound]\label{lem:cover-lower}
For $2\le\ell\le k$, every cover $\mathcal C_k$ with
$|\mathcal C_k|\le\ell$ satisfies
\[
 \max_{d\in\mathcal C_k}\log d
 =\Omega\!\left(\frac{k}{\log\ell}\right),
\]
with an absolute implicit constant, uniformly in $\ell$.
\end{lemma}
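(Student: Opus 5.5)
The plan is to exhibit many element orders of $\Sym_k$ that are products of $m$ distinct primes taken from a fixed window, and then use a counting (pigeonhole) argument to force one cover integer to be divisible by a constant fraction of the primes in that window. Set $m=\lceil\log\ell\rceil\ge1$ and $x=k/m$. Let $P$ be the set of primes in $(x/2,x]$, and put $N=|P|$. For any $m$-subset $R\subseteq P$ we have $\sum_{p\in R}p\le mx=k$. So a permutation whose cycles have the lengths $p\in R$, completed with fixed points, lies in $\Sym_k$ and has order $\prod_{p\in R}p$. By Definition~\ref{def:cover}, this product divides some $d\in\mathcal C_k$.

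For $d\in\mathcal C_k$, let $P_d=\{p\in P: p\mid d\}$ and $t=\max_d|P_d|$. Every $m$-subset of $P$ is contained in some $P_d$, so
\[
 \binom{N}{m}\le\sum_{d\in\mathcal C_k}\binom{|P_d|}{m}\le\ell\binom{t}{m}.
\]
We may assume $t\ge m$, since otherwise the right side is zero. Using $\binom{t}{m}/\binom{N}{m}\le(t/N)^m$, we get $t\ge N\ell^{-1/m}\ge N/e$, because $m\ge\log\ell$. Choose $d$ attaining $t$. The primes in $P_d$ exceed $x/2$ and all divide $d$, so
\[
 \log d\ge t\log(x/2)\ge\frac1e\,N\log(x/2)\ge\frac1e\bigl(\theta(x)-\theta(x/2)\bigr)\cdot\frac{\log(x/2)}{\log x},
\]
where $\theta$ is the Chebyshev function summing $\log p$ over primes $p$ up to its argument.

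The main obstacle is the step I expect to need an outside input: a lower bound $\theta(x)-\theta(x/2)\ge cx$ for all $x\ge x_0$. The paper only proves upper bounds (Lemma~\ref{lem:elementary-weight}). I would cite Chebyshev's classical estimate or the prime number theorem, which gives $\theta(x)-\theta(x/2)\sim x/2$. Alternatively, one can derive it from $\log\binom{2n}{n}\ge n\log 2$ combined with the upper bound already proved: the primes at most $2n/3$ contribute little, and the contribution of prime powers is controlled. With that bound, $\log d\ge c'x=c'k/m\ge c''k/\log\ell$, since $m\le\log\ell+1\le 3\log\ell$ for $\ell\ge2$.

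Finally, I will check uniformity in $\ell$. Since $\ell\le k$, we have $m\le\log k+1$, so $x\ge k/(\log k+1)\to\infty$ and the prime estimate applies once $k\ge k_0$ for an absolute $k_0$. For $k<k_0$, the claim is trivial after adjusting the constant: the $k$-cycle has order $k$, so some $d\ge k\ge2$, which gives $\log d\ge\log2$, while $k/\log\ell\le k_0/\log2$ is bounded. Hence a single absolute implicit constant works for all $2\le\ell\le k$.
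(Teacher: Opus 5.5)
Your proof is correct and is essentially the paper's argument. Both use the same family of products of $\lceil\log\ell\rceil$ distinct primes at scale $k/\log\ell$, and the same counting bound $(t/N)^m\ge1/\ell$ to force some cover integer to be divisible by at least $N/e$ of those primes. The only real difference is the window. The paper uses primes in $(s,4s]$ with $s=\lfloor k/(8\log\ell)\rfloor$, so the needed prime lower bound follows from $\binom{4s}{2s}$ and the already-proved estimate $\psi(s)\le2s\log2$, whereas you cite Chebyshev.

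Two small repairs are needed. First, state explicitly that $N\ge m$ for large $k$. It follows from your prime bound, and without it $\binom Nm=0$ makes the pigeonhole step vacuous rather than contradictory. Second, in your self-contained alternative, use $\log\binom{2n}{n}\ge2n\log2-\log(2n+1)$: the weaker $n\log2$ does not exceed the bound $\theta(2n/3)\le\frac43n\log2$ available from the proof of Lemma~\ref{lem:elementary-weight}.
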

\begin{proof}
Set $t=\lfloor k/(8\log\ell)\rfloor$, and let $N$ be the number of
primes in $(t,4t]$.  For sufficiently large $k$,
$t\ge k/(16\log\ell)$, uniformly for $2\le\ell\le k$.

The exponent of each prime $p$ in $\binom{4t}{2t}$ is at most
$\lfloor\log(4t)/\log p\rfloor$.  Powers $p^a\le4t$ with $a\ge2$
contribute at most $2\sqrt t\log(4t)$, since their primes are at most
$2\sqrt t$ and each contributes at most $\log(4t)$.
Using $\psi(t)\le2t\log2$ from the proof of
Lemma~\ref{lem:elementary-weight} and the largest binomial coefficient,
\[
 \begin{aligned}
 \sum_{t<p\le4t}\log p
 &\ge\log\binom{4t}{2t}-2t\log2-2\sqrt t\log(4t)\\
 &\ge2t\log2-\log(4t+1)-2\sqrt t\log(4t)\\
 &\ge t\log2
 \end{aligned}
\]
for sufficiently large $t$.  Consequently,
\[
 N\ge\frac{t\log2}{\log(4t)}
 \ge\frac{k\log2}{16(\log k)^2}
 \ge\lceil\log\ell\rceil.
\]
Every product of $\lceil\log\ell\rceil$ distinct primes in $(t,4t]$
is an element order of $\Sym_k$: use those primes as disjoint cycle
lengths and fix the remaining points.  Their sum is at most
$4t\lceil\log\ell\rceil\le k$, since $\lceil\log\ell\rceil\le2\log\ell$.

Suppose $\log d<N\log t/e$ for every $d\in\mathcal C_k$.
Then each $d$ is divisible by fewer than $N/e$ of the $N$ primes.
Removing such prime divisors one at a time cannot increase their
proportion among the remaining primes.  Thus each $d$ covers less
than the fraction $e^{-\lceil\log\ell\rceil}\le1/\ell$ of the above
products.  Since $|\mathcal C_k|\le\ell$, not all products would be
covered, a contradiction.  Therefore
\[
 \begin{aligned}
 \max_{d\in\mathcal C_k}\log d
 &\ge\frac Ne\log t\\
 &\ge\frac{\log2}{e}\,t\frac{\log t}{\log(4t)}
 \ge\frac{\log2}{32e}\,\frac{k}{\log\ell}.
 \end{aligned}
\]
The last inequality uses $\log t/\log(4t)\ge1/2$ for sufficiently
large $k$.
\end{proof}

\medskip

With reference to the particular words constructed in Lemma~\ref{lem:cover-identity},
for any cover $\mathcal C_k=\{d_1,\ldots,d_r\}$, the above lower bound has the following effect:
Each block exponent in $U_i$, added to its replacement in
$C_i=U_i^{\dagger d_{i+1}}$, is a positive multiple of $d_{i+1}$.
Thus $|U_i|+|C_i|\ge d_{i+1}$, and the recursion gives
\[
 |U_{i+1}|=2|U_i|+|C_i|
 \ge\max\{2|U_i|,d_{i+1}\}.
\]
Since $|U_1|=d_1\ge1$, induction and the final substitution give
\begin{equation}\label{eq:construction-length-lower}
 |u|=|v|=2|U_r|
 \ge\max\{2^r,2\max\mathcal C_k\}.
\end{equation}
If $r>k$, then $\log|u|\ge r\log2>k\log2$.
If $r\le k$, Lemma~\ref{lem:cover-lower} with $\ell=k$ gives
$\log|u|\ge\max_{d\in\mathcal C_k}\log d=\Omega(k/\log k)$.
Together with Theorem~\ref{thm:elementary-identity}, this shows that the
minimum logarithmic word length attainable by this construction is
$\Theta(k/\log k)$.

Thus, changing the cover or the order of enumerating $d_1,\dots,d_r$ cannot improve the order
$k/\log k$ for this recursion.  Constants and lower-order terms can
still be improved, as in Appendix~\ref{sec:refinement}.  Furthermore, the facts that the numerator of $\frac{k}{\log \ell}$ is independent of $\ell$, and that the denominator is in $\log\ell$, impose severe limitations on possible improvements via this strategy.  We next explore whether other
classes of identities (considered in \cite{KS}) can reach the Landau scale and give better bounds on $\Sepp(n)$.

\subsection{Frontier Tradeoffs}

We compare the bound just proved with the maximum element order and
the common exponent of $\Sym_k$.
Let $M(k)$ be \emph{Landau's function} and $E(k)$ the
\emph{exponent of the symmetric group}:
\[
 \begin{aligned}
 M(k)&:=\max_{\sigma\in\Sym_k}\ord(\sigma),\\
 E(k)&:=\min\{N\ge1:\sigma^N=1\text{ for every }\sigma\in\Sym_k\}.
 \end{aligned}
\]
A permutation with cycle lengths $c_1,\ldots,c_t$ has order
$\lcm(c_1,\ldots,c_t)$.  Thus $M(k)$ is the maximum of this least
common multiple subject to $\sum_i c_i=k$.  The common exponent is
\[
 E(k)=\lcm(1,2,\ldots,k)
     =\prod_{p\le k}p^{\lfloor\log k/\log p\rfloor},
\]
since every cycle length divides this number and $\Sym_k$ contains a
$j$-cycle for each $1\le j\le k$.  For example, $M(5)=6$ and $E(5)=60$;
the maximum order is attained by disjoint cycles of lengths $2$ and $3$.
Landau's theorem (equation~(6) in~\cite{DNZ}) and the prime number
theorem~\cite{DiamondSteinig}
give, respectively,
\[
 M(k)=\exp\!\bigl((1+o(1))\sqrt{k\log k}\bigr),\qquad
 E(k)=\exp\!\bigl((1+o(1))k\bigr).
\]
Both estimates are unconditional; for the second,
$\log E(k)=\sum_{p^a\le k}\log p\sim k$.

The identity $0^{E(k)}=1^{E(k)}$ yields the lower bound
$\Sepp(n)=\Omega(\log n)$.
Proposition~14 in~\cite{KS} gives a length lower bound of $M(k)/2$
for numerical identities.  Theorem~\ref{thm:elementary-identity}
reduces the previous upper bound $\exp(2k/3+O(k/\log k))$ to
$\exp(O(k/\log k))$ within this class.
If, for every sufficiently large $k$, we
could construct distinct equal-length positive binary identities with
\[
 |u_k|=|v_k|\le M(k)^{O(1)}
                 =\exp\!\bigl(O(\sqrt{k\log k})\bigr),
\]
the inversion and padding argument in Section~\ref{sec:introduction}
would give
\[
 \Sepp(n)\ge\Sepp_{=}(n)
   =\Omega\!\left(\frac{(\log n)^2}{\log\log n}\right).
\]
Conjecture~15 in~\cite{KS} states that a shortest
positive identity of $\Sym_k$ is numerical for all sufficiently large $k$.
Corollary~16 in~\cite{KS} then gives $\Sepp(n)=O((\log n)^2/\log\log n)$.
The hypothetical construction above would match this conditional upper
bound in order of magnitude.

Every permutation order is at most $M(k)$, but need not divide $M(k)$.
The set $\{1,\ldots,M(k)\}$ therefore covers every permutation order;
$\{M(k)\}$ need not.  Lemma~\ref{lem:cover-identity} gives
\[
 |u_k|=|v_k|\le2M(k)3^{M(k)-1},
\]
an upper bound larger than $E(k)$ for large $k$.  A positive-word
construction whose length is polynomial in the size of the cover and
its largest integer would attain the $M(k)^{O(1)}$ target with this cover.

When inverse letters are allowed, Kozma and Thom's combination
lemma (Lemma~2.2 in~\cite{KozmaThom}) gives a nontrivial group law of length
at most $16M(k)^3$.  Their stronger construction (Theorem~1.3
in~\cite{KozmaThom}) gives length $\exp(O((\log k)^4\log\log k))$.
These constructions use nested commutators and do not directly yield
positive identities of comparable length (see Section~1 in~\cite{BKSS}).
Replacing $\sigma_i^{-1}$ by
$\sigma_i^{E(k)-1}$, for $i\in\{0,1\}$, can reintroduce the cost of
$E(k)$.  The remaining challenge is to combine the possible exponents
into a short positive binary identity.
For the recursion~\eqref{eq:selector-recursion}, however,
Lemma~\ref{lem:cover-lower} shows that no choice or enumeration of the cover
can give $\log|u_k|=o(k/\log k)$.

\section{AI Disclosure and Concluding Discussion}
\label{sec:disclosure}

The $\Omega(\log n \log\log n)$ lower bound came from usage of GPT 5.6 Sol (Pro level) in July 2026.  Subsequent effort toward stronger lower bounds improved the leading
coefficient to $1-o(1)$, as reflected in Appendix~\ref{sec:refinement}.  
The original AI proof of the cover bounds in Section~\ref{sec:elementary-cover} used the classic probabilistic method.  This proof was humanly compacted into 8 pages.  The AI was then prompted to replace the probabilistic method by a deterministic explicit construction, which (after some further polishing by both human and AI) produced the proof presented here.  Further effort was devoted to mapping out the frontier, in particular:

\begin{itemize}
    \item[$\bullet$] determining which improvements to the word identity bounds in Section~\ref{sec:selector} would give order $\log^2 n/\log\log n$ per Karpova and Shur \cite{KS}.
    \item[$\bullet$]
    in-between, which might yield order $(\log n)^{1+\epsilon}(1 - o(1))$ for $0 < \epsilon < 1$.
    \item[$\bullet$] 
    whether any lower bounds of polynomial order $n^\epsilon$ might even be possible using this approach.
\end{itemize}

After trying several more heuristic ideas, GPT 5.6 opined that any improvement of the lower bound to $\Omega((\log n)^{1+\epsilon})$ would require identity size exponential in $(\log n)^{1/\epsilon}$, i.e., bringing things all the way down to quasi-polynomial size.  Doing so strikes us as embodying much of the original difficulty of the permutation separating words problem to begin with.  We therefore believe that we have probed the frontier enough to aver that substantial immediate improvements to the scheme of this paper will require new ideas that did not drop out from further AI-assisted trials.


\clearpage
\appendix
\section{Optimized constants and lower-order terms}
\label{sec:refinement}

We sharpen the bounds by a greedy construction.  The prime
estimates use the quantitative prime number theorem of Diamond
and Steinig~\cite{DiamondSteinig}.  Put
\begin{equation}\label{eq:notation}
 L=\log k,\qquad \ell=\log L,\qquad
 h=\log(\log3),\qquad \kappa=1+h,\qquad c_0=1-h.
\end{equation}
\begin{theorem}\label{thm:identity}
As $k\to\infty$,
\begin{equation}\label{eq:main-identity}
 \log\ell_{+,2}(k)
 \le \frac{k}{L}
   +\bigl(3\ell+\log\ell+\kappa+o(1)\bigr)\frac{k}{L^2}.
\end{equation}
The constructed binary words have the same number of occurrences of each
letter.
\end{theorem}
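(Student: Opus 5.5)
The plan is to apply Lemma~\ref{lem:cover-identity} to a sharper cover than the one in Section~\ref{sec:elementary-cover}. That lemma gives
\[
 \log\ell_{+,2}(k)\le\log2+(r-1)\log3+\max_{d\in\mathcal C_k}\log d,\qquad r=|\mathcal C_k|.
\]
It also automatically gives equal letter counts, which settles the last sentence of Theorem~\ref{thm:identity}. The shape of~\eqref{eq:main-identity} indicates the tradeoff. If a cover of size $r$ can achieve $\max\log d\approx k/\log r$, then minimizing $r\log3+k/\log r$ forces $r\log3\approx k/(\log r)^2$. This gives $\log r=L-2\ell-h+o(1)$, and the total is
\[
 \frac{k}{L}+\bigl(2\ell+h+1+o(1)\bigr)\frac{k}{L^2}.
\]
This accounts for $\kappa=1+h$ and two of the three factors of $\ell$. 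The remaining $\ell+\log\ell$ should come from the prime-power bookkeeping in the cover. So the plan is to build a cover with $\log r=L-2\ell-h-o(1)$ and $\max\log d\le (k/L)(1+(\ell+\log\ell+o(1))/L)$ up to the stated error.

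For the cover I would use a greedy set-cover construction, replacing the segment argument. Fix a threshold $y$, to be chosen of order $k/L$ up to $\ell$-factors. Every cover integer contains the full factor $\lcm(1,\dots,y)$. By the quantitative prime number theorem of~\cite{DiamondSteinig}, this factor costs $\psi(y)=y(1+o(1/L))$ in $\log d$. What remains is to cover the sets $S\cap(y,k]$ for $S\in\cF_k$. By Lemma~\ref{lem:support}, such a set has at most $m=\lfloor k/y\rfloor$ elements, and all of them are essentially primes, since prime powers $p^a>y$ with $a\ge2$ contribute only $O(\sqrt k\log k)$ in total and can be absorbed into the base factor. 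The candidate sets $T$ are subsets of primes in $(y,k]$ of prescribed logarithmic weight $w$. The averaging argument (the fraction of such $T$ containing a fixed $m$-set of large primes), combined with the greedy bound $r\le(1+\log|\cF_k|)\cdot\max_S\Pr[S\subseteq T]^{-1}$, gives
\[
 \log r\le m\log\frac{\psi(k)}{w}+O(\log k).
\]
Here $\log|\cF_k|=O(\sqrt k)$ is negligible after taking logarithms. Then $\log d\le y+w+o(k/L^2)$.

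The key steps, in order, are as follows. First, separate small and large prime powers, and bound the error from higher prime powers and from the PNT remainder. Second, prove the greedy covering bound, either by an averaging lemma over uniformly random $T$ of weight $w$ or by a density argument weighted by $\log p$, using that large primes sit near $k/m$ in log scale. Third, optimize $y$ and $w$ jointly with $r$. I would try $w\approx k/L$ and $y\approx k\ell/L^2$, so that $m\approx L/\ell$ and $\log(\psi(k)/w)\approx L-\ell$ fixes $\log r$. Then I would expand everything to second order in $1/L$.

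I expect the main obstacle to be the third step. Getting exactly $3\ell+\log\ell+\kappa$ requires tracking that primes in $(y,k]$ have $\log p$ ranging from about $L-\ell$ to $L$, which costs an $\ell$ and a $\log\ell$ in converting sums of primes to sums of logarithms. It also requires handling the integrality of $m$ and the $\log3$ in $r$ without losing constants. If the first choice of $y$ does not balance, I would parametrize $y=\alpha k\ell/L^2$ and $w=k/L+\beta k\ell/L^2$, and solve the resulting first-order conditions.
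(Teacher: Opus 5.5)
\textbf{There is a genuine gap in the covering step.} The reduction to Lemma~\ref{lem:cover-identity} and the equal letter counts are fine. The problem is your candidate sets $T$: they select primes of $(y,k]$ uniformly, so every prime enters with the same density $\rho\approx w/\psi(k)$. For such sets your estimate $\log r\approx m\log(\psi(k)/w)$ with $m=k/y$ is essentially sharp. About $k/(2y)$ primes from $(y,2y]$ form a feasible $S$, and a set containing a $\rho$-fraction of those primes covers only about a $\rho^{m}$-fraction of such $S$.

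Since $(r-1)\log3$ must stay $O(k/L)$, you need $\log r\le L$. This forces $y\gtrsim k\log(k/w)/L$, and hence
\[
 \log d\approx y+w\ \gtrsim\ \frac{k\log(k/w)}{L}+w\ \gtrsim\ \frac{k\ell}{L}.
\]
So this scheme cannot reach even the leading term $k/L$. It gives only $O(k\log\log k/\log k)$, which is worse than Section~\ref{sec:elementary-cover}.

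Two arithmetic slips hide this in your parameter choice:
\begin{itemize}
\item $y\approx k\ell/L^2$ gives $m\approx L^2/\ell$, not $L/\ell$.
\item $w\approx k/L$ gives $\log(\psi(k)/w)\approx\ell$, not $L-\ell$.
\end{itemize}
With the correct values, your own formula yields $\log r\approx L^2$.

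There is also an independent loss. The factor $1+\log|\cF_k|$ adds about $\frac12 L$ to $\log r$, and this is not negligible. The value $\log r$ must be pinned to $L-2\ell-h+o(1)$, so an additive $cL$ in $\log r$ multiplies the leading term by $1/(1-c)$, even with a better selection rule.

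\textbf{The missing idea is to tilt the inclusion probability by size.} Then containment is governed by the support cost $\sum_{s\in S}s\le k$ rather than by $|S|$. The paper does this as follows.
\begin{itemize}
\item It takes $y=k/L^2$ and groups the prime powers in $(y,k]$ into intervals $I_i$ of length $\delta=k/(L^3\ell^4)$.
\item It includes $I_i$ independently with probability $q_i=e^{-\lambda(z_i-y)}$, where $\lambda=x/k$.
\item Every feasible index set $J$ is then contained with probability at least $p_j=e^{-x(1-j/L^2)}$, whatever its size.
\item The mean of $\log D_H$ is about $\psi(y)+1/\lambda\approx k/L^2+k/x$.
\item A conditional second-moment bound keeps $\log D_H\le B_k$ with probability at least $1-1/\ell$.
\item A greedy argument runs over interval index sets, with a potential weighted by $p_{|J|}$. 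Their counts satisfy $\log M_j\le j\log(e^2K/j^2)=O(j\ell)$, with $K\approx L^3\ell^4$. This replaces your $\log|\cF_k|$, and the greedy needs only $R\approx e^xL\ell/e$ integers.
\end{itemize}

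The extra $\ell+\log\ell$ beyond your idealized $2\ell+\kappa$ is exactly this covering overhead, $\log R-x=\ell+\log\ell-1+o(1)$, attained near $j\approx L$. It does not come from prime-power bookkeeping. Taking $x=L-3\ell-\log\ell+c_0$ gives $R\log3\approx k/L^2$ and $\max\log d\le k/L+(3\ell+\log\ell+h+o(1))k/L^2$, which yields~\eqref{eq:main-identity}.
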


\begin{theorem}\label{thm:sepp}
As $n\to\infty$,
\begin{equation}\label{eq:main-sepp}
 \begin{split}
 \Sepp_{=}(n)\ge\log n\bigl(&\log\log n-2\log\log\log n\\
 &-\log\log\log\log n-\kappa-o(1)\bigr).
 \end{split}
\end{equation}
The same lower bound holds for $\Sepp(n)$.
\end{theorem}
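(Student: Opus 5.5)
Theorem~\ref{thm:sepp} follows from Theorem~\ref{thm:identity} by inverting the length bound, as in the deduction of Theorem~\ref{thm:elementary-sepp} from Theorem~\ref{thm:elementary-identity}. Given a large $n$, let $k$ be the largest integer with $\ell_{+,2}(k)\le n$. Take a homogeneous positive identity $u_k=v_k$ of $\Sym_k$ of length at most $n$. Pad both sides with $0^{n-|u_k|}$; the words stay distinct, have length exactly $n$, and still form an identity of $\Sym_k$. The correspondence in Section~\ref{sec:introduction} then gives $\Sepp(n)\ge\Sepp_{=}(n)>k$. It therefore suffices to show that every $k$ below the right side of~\eqref{eq:main-sepp} satisfies $\ell_{+,2}(k)\le n$. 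By~\eqref{eq:main-identity}, and since the right side of~\eqref{eq:main-identity} is increasing in $k$, this reduces to showing that
\[
 \frac{k}{L}+\bigl(3\ell+\log\ell+\kappa+o(1)\bigr)\frac{k}{L^2}\le\log n
\]
whenever $k=\lfloor\log n\,(\log\log n-2\log\log\log n-\log\log\log\log n-\kappa-\varepsilon)\rfloor$, for every fixed $\varepsilon>0$ and $n$ large.

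The main step is the asymptotic inversion of $F(k)=\frac kL\bigl(1+\frac{3\ell+\log\ell+\kappa+o(1)}{L}\bigr)=\log n$. Write $N=\log n$, $\lambda=\log N$, $\mu=\log\lambda$, $\nu=\log\mu$. Try $k=N(\lambda-2\mu-\nu-\kappa-\varepsilon)$. Then
\[
 L=\log k=\lambda+\log\lambda+\log\Bigl(1-\frac{2\mu+\nu+\kappa+\varepsilon}{\lambda}\Bigr)=\lambda+\mu-\frac{2\mu}{\lambda}+O\Bigl(\frac{\nu}{\lambda}\Bigr),
\]
and $\ell=\log L=\mu+O(\mu/\lambda)$, $\log\ell=\nu+o(1)$. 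Expanding,
\[
 \frac kL=N\,\frac{\lambda-2\mu-\nu-\kappa-\varepsilon}{\lambda+\mu+o(1)}
 =N\Bigl(1-\frac{3\mu+\nu+\kappa+\varepsilon+o(1)}{\lambda}\Bigr),
\]
using that $L$ differs from $\lambda+\mu$ by $o(1)$. The correction term is
\[
 \frac{(3\ell+\log\ell+\kappa+o(1))k}{L^2}=N\,\frac{3\mu+\nu+\kappa+o(1)}{\lambda}.
\]
The sum is therefore $N(1-(\varepsilon+o(1))/\lambda)<N$ for large $n$.

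Since $\varepsilon$ is arbitrary, a diagonal choice $\varepsilon\to0$ gives the stated $-o(1)$. The same bound for $\Sepp(n)$ follows from $\Sepp(n)\ge\Sepp_{=}(n)$.

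The obstacle is bookkeeping. Every error term in $L$ and $\ell$ must be $o(1)$ after multiplication by the relevant factor. The second-order pieces of $L$ ($\log\lambda$ and $2\mu/\lambda$) enter $\frac kL$ only through division by $\lambda$, so they contribute $o(N/\lambda)$, which is harmless. The delicate terms are those of exact size $N/\lambda$: the leading $\mu$ in $L$, which produces the extra $-\mu$ and gives the total coefficient $3$, and the leading $\ell\approx\mu$ in~\eqref{eq:main-identity}. I would verify that these cancel exactly.
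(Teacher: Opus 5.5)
Your proposal is correct and is essentially the paper's argument. You fix $k=\lfloor\log n\,(\log\log n-2\log\log\log n-\log\log\log\log n-\kappa-\varepsilon)\rfloor$ and expand $\log k=\lambda+\mu+o(1)$, $\log\log k=\mu+o(1)$, $\log\log\log k=\nu+o(1)$. This shows the bound of Theorem~\ref{thm:identity} is $\log n\,\bigl(1-(\varepsilon+o(1))/\lambda\bigr)<\log n$. You then pad with $0$'s and use the identity--automaton correspondence. Your fixed-$\varepsilon$-then-diagonalize step is equivalent to the paper's slowly decaying $\varepsilon_n$ that dominates $B^2/A$ and the remainder envelope.

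Two small slips do not affect the proof. First, you do not need monotonicity of the right side of~\eqref{eq:main-identity}, which with its $o(1)$ term is not literally guaranteed. A single $k$ with $\ell_{+,2}(k)\le n$ already gives $\Sepp_{=}(n)>k$. Second, your last paragraph calls $\log\lambda=\mu$ a harmless second-order piece of $L$, but it is the leading correction. Your displayed computation nevertheless treats it correctly.
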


The number-theoretic estimate we need is
\begin{equation}\label{eq:pnt}
 \theta(z)=z+O_A\left(\frac{z}{(\log z)^A}\right),\qquad
 \psi(z)=z+O_A\left(\frac{z}{(\log z)^A}\right)
\end{equation}
for every fixed $A>0$, where
$\theta(z)=\sum_{p\le z}\log p$ and
$\psi(z)=\sum_{p^a\le z}\log p$.
Diamond and Steinig~\cite{DiamondSteinig} give an elementary proof of
\[
 \psi(z)=z+O\!\left(z\exp\!\left\{
 -\frac{(\log z)^{1/7}}{(\log\log z)^2}\right\}\right),
\]
whose error is $o(z/(\log z)^A)$ for every fixed $A>0$.
Since $\psi(z)-\theta(z)=O(\sqrt z\log z)$ by
Lemma~\ref{lem:elementary-weight}, this proves~\eqref{eq:pnt}.
In particular,
$\log\lcm(1,\ldots,\lfloor z\rfloor)=\psi(z)$.
The distinct weight sum $\sum_{p^a\le z}\log(p^a)$ is not $\psi(z)$.
When needed below, its proper-prime-power part is bounded separately by
$O(\sqrt z(\log z)^2)$.

\subsection{Interval products}

Use the notation in~\eqref{eq:notation}, and choose
\begin{equation}\label{eq:parameters}
 y=\frac{k}{L^2},\qquad
 \delta=\frac{k}{L^3\ell^4},\qquad
 x=L-3\ell-\log\ell+c_0,\qquad \lambda=\frac{x}{k}.
\end{equation}
Put $P_y=\lcm(1,\ldots,\lfloor y\rfloor)$.
Partition $(y,k]$ into intervals
\[
 I_i=(z_i,\min(z_i+\delta,k)],\qquad
 z_i=y+(i-1)\delta,
 \quad 1\le i\le m:=\left\lceil\frac{k-y}{\delta}\right\rceil.
\]
Set
\[
 W_i=\prod_{s\in\cP_k\cap I_i}s,\qquad w_i=\log W_i,
 \qquad q_i=e^{-\lambda(z_i-y)}.
\]
For every $H\subseteq\{1,\ldots,m\}$, define
\begin{equation}\label{eq:interval-products}
 D_H=P_y\prod_{i\in H}W_i,\qquad
 a_H=\prod_{i\in H}q_i\prod_{i\notin H}(1-q_i).
\end{equation}
Empty products equal one.  Expanding the finite product gives
\[
 \sum_Ha_H=\prod_{i=1}^m\bigl(q_i+(1-q_i)\bigr)=1.
\]

\subsubsection{Counting index sets}

For an element order $q$ of $\Sym_k$, let $J$ be the indices of the
intervals containing elements of $S(q)$ greater than $y$.
By choosing one element of $S(q)$ from each occupied interval and applying
Lemma~\ref{lem:support},
\begin{equation}\label{eq:feasible}
 \sum_{i\in J}z_i\le k,\qquad j:=|J|\le L^2.
\end{equation}
If $J\subseteq H$, then $q\mid D_H$.
Call any interval set satisfying~\eqref{eq:feasible} feasible, including
the empty set.
For any fixed feasible $J$ of size $j$,
\begin{equation}\label{eq:containment}
 \sum_{H\supseteq J}a_H
 =\prod_{i\in J}q_i
 =e^{-\lambda\sum_{i\in J}(z_i-y)}
 \ge p_j:=e^{-x(1-j/L^2)}.
\end{equation}
The same lower bound holds for $J=\varnothing$.

Put
\[
 K=\lceil k/\delta\rceil+\lfloor L^2\rfloor
   =(1+o(1))L^3\ell^4.
\]
Every feasible set satisfies $\sum_{i\in J}i\le K$.  If $M_j$ denotes
the number of feasible sets of cardinality $j\ge1$, then
\begin{equation}\label{eq:interval-count}
 M_j\le\frac{\binom Kj}{j!}
 \le\frac{K^j}{(j!)^2}
 \le\left(\frac{e^2K}{j^2}\right)^j.
\end{equation}
Indeed, each distinct $j$-set has $j!$ orderings, all positive integer
$j$-tuples of sum at most $K$; there are exactly $\binom Kj$ such tuples.

For $j\ge1$ with $M_j>0$, define
\[
 t=\frac{xj}{L^2},\qquad
 C_k=2+\log K-4\log L+2\log x
    =\ell+4\log\ell+2+o(1).
\]
Combining~\eqref{eq:containment} and~\eqref{eq:interval-count} gives
\begin{equation}\label{eq:ratio}
 p_j^{-1}\log M_j
 \le e^x\frac{L^2}{x}\,te^{-t}(C_k-2\log t)
 \le e^x\frac{L^2}{x}\frac{C_k+2}{e}.
\end{equation}
For the last inequality, $te^{-t}\le1/e$; when $t\ge1$ the term
$-2te^{-t}\log t$ is nonpositive, and when $0<t<1$ it is at most $2/e$.
Thus the maximum in~\eqref{eq:ratio} is bounded by
\begin{equation}\label{eq:count-scale}
 \left(\frac1e+o(1)\right)e^xL\ell.
\end{equation}
The reduction to interval sets, rather than individual supports,
is what gives this scale.

\subsubsection{Bounding the products}

\begin{lemma}\label{lem:interval-weight}
Uniformly over the intervals,
\begin{equation}\label{eq:interval-weight}
 w_i\le(1+\varepsilon_k)\delta,\qquad
 \varepsilon_k=o\left(\frac1{L\ell}\right).
\end{equation}
\end{lemma}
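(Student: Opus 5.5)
Write $w_i=\sum_{s\in\cP_k\cap I_i}\log s$ and split it into the prime part and the proper-prime-power part. For $I_i=(z,z']$ with $z'-z\le\delta$ and $z\ge y$, the prime part is at most $\theta(z')-\theta(z)$, because $\log p$ is the weight of a prime $p$. The proper powers $p^a$ with $a\ge2$ in $(z,z']$ have weight $\log(p^a)\le L$. Their number is at most the number of $p^a\le k$ with $a\ge2$, which is $O(\sqrt k\,L)$. So this part is $O(\sqrt k\,L^2)$, the $O(\sqrt z(\log z)^2)$ bound noted after~\eqref{eq:pnt}.

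Then I compare each part to $\varepsilon_k\delta$, where $\delta=k/(L^3\ell^4)$. The proper-power part gives
\[
 \frac{\sqrt k\,L^2}{\delta}=\frac{L^5\ell^4}{\sqrt k}\to0
\]
faster than any power of $1/L$, so it is certainly $o(1/(L\ell))$ after dividing by $\delta$.

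For the prime part, apply~\eqref{eq:pnt} at both endpoints:
\[
 \theta(z')-\theta(z)=(z'-z)+O_A\!\left(\frac{z'}{(\log z)^A}\right).
\]
Since $y\le z<z'\le k$ and $\log y=L-2\log L\sim L$, the error is $O_A(k/L^A)$ uniformly in $i$. Dividing by $\delta$ gives $O_A(L^{3-A}\ell^4)$. Choosing $A=5$ makes this $O(\ell^4/L^2)=o(1/(L\ell))$. Together with $z'-z\le\delta$, this yields $w_i\le\delta+o(\delta/(L\ell))$ uniformly, which is~\eqref{eq:interval-weight}.

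The main obstacle is uniformity. The relative-error form of the prime number theorem must be applied with the error measured against $k$, not against the short interval length $\delta$. It is also why the lower cutoff $y=k/L^2$ matters: it keeps $\log z\asymp L$ on every interval, so a single choice of $A$ controls all intervals at once. I expect no other difficulty, since the strong Diamond--Steinig error term lets us take $A$ as large as needed.
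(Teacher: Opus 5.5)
Your proof is correct and matches the paper's argument: you split $w_i$ into a prime part, bounded by applying~\eqref{eq:pnt} at both endpoints with error $O(k/L^A)$ uniformly because $\log z\ge\log y\sim L$, and a proper-prime-power part, bounded globally by $O(\sqrt k\,L^2)$. The only difference is that you take $A=5$ where the paper takes $A=12$; since $L^{3-A}\ell^4=o(1/(L\ell))$ needs only $A>4$, either choice works.
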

\begin{proof}
For primes, subtract the endpoint estimates in~\eqref{eq:pnt}, using
$A=12$.  Both endpoints are at least $k/L^2$, so the error is
$O(k/L^{12})$, uniformly.  The total contribution of all proper prime
powers at most $k$ is $O(\sqrt{k}L^2)$; this bounds their contribution
to any one interval.  Both errors are $o(\delta/(L\ell))$.
The last interval may be shorter, which does not affect the upper bound.
\end{proof}

Put $X_H=\log D_H$ and
$\mu=\sum_Ha_HX_H=\psi(y)+\sum_iq_iw_i$.
The geometric series and Lemma~\ref{lem:interval-weight} give
\begin{align}
 \mu
 &\le\psi(y)+(1+\varepsilon_k)
       \frac{\delta}{1-e^{-\lambda\delta}}\notag\\
 &\le\frac{k}{x}+\frac{k}{L^2}
       +o\left(\frac{k}{L^2\ell}\right).
 \label{eq:weighted-size}
\end{align}
Here $\lambda\delta\to0$,
$\delta/(1-e^{-\lambda\delta})=1/\lambda+O(\delta)$, and
$\psi(y)=y+o(k/(L^2\ell))$ by~\eqref{eq:pnt}.

For a feasible $J$, define
\[
 \mu_J=\psi(y)+\sum_{i\in J}w_i+\sum_{i\notin J}q_iw_i
      =\mu+\sum_{i\in J}(1-q_i)w_i.
\]
Using $1-e^{-u}\le u$ and~\eqref{eq:feasible},
\begin{equation}\label{eq:forced-size}
 \mu_J-\mu
 \le(1+\varepsilon_k)\delta\lambda\sum_{i\in J}(z_i-y)
 \le(1+\varepsilon_k)\delta x
 =O\left(\frac{k}{L^2\ell^4}\right).
\end{equation}
Expanding over $H\setminus J$ gives the quadratic identity
\begin{align}
 \frac{\sum_{H\supseteq J}a_H(X_H-\mu_J)^2}
      {\prod_{i\in J}q_i}
 &=\sum_{i\notin J}w_i^2q_i(1-q_i)\notag\\
 &\le(\max_iw_i)\sum_iw_iq_i
 =O(\delta k/L)
 =O\left(\frac{k^2}{L^4\ell^4}\right).
 \label{eq:quadratic-sum}
\end{align}
Set
\begin{equation}\label{eq:cap}
 B_k=\frac{k}{x}+\frac{k}{L^2}+\frac{k}{L^2\ell},
 \qquad \gamma_k=1-\frac1\ell.
\end{equation}
By~\eqref{eq:weighted-size} and~\eqref{eq:forced-size},
$B_k-\mu_J\ge k/(2L^2\ell)$ for all sufficiently large $k$,
uniformly over feasible $J$.
For $X_H>B_k$ we have $(X_H-\mu_J)^2\ge(B_k-\mu_J)^2$, so
\[
 \frac{\sum_{\substack{H\supseteq J\\X_H>B_k}}a_H}
      {\prod_{i\in J}q_i}
 \le
 \frac{\sum_{i\notin J}w_i^2q_i(1-q_i)}{(B_k-\mu_J)^2}
 =O(\ell^{-2})\le\frac1\ell.
\]
Thus~\eqref{eq:containment} gives
\begin{equation}\label{eq:good}
 \sum_{\substack{H\supseteq J\\X_H\le B_k}}a_H
 \ge\gamma_k\prod_{i\in J}q_i\ge\gamma_kp_j.
\end{equation}

\subsection{Greedy covering}

Put
\begin{equation}\label{eq:cover-count}
 R=\left\lceil\frac{e^x}{\gamma_k}
 \left(\frac{L^2}{x}\frac{C_k+2}{e}+4\ell\right)\right\rceil.
\end{equation}
Since $e^xp_j\ge1$, \eqref{eq:ratio} and~\eqref{eq:cover-count} imply
\[
 R\gamma_kp_j\ge\log M_j+4\ell
\]
for every cardinality with $M_j>0$, including $j=0$ with $M_0=1$.
Let $\mathcal U_t$ be the feasible sets not covered after $t$ choices, and set
\[
 \Phi_t=\sum_{J\in\mathcal U_t}
             e^{-(R-t)\gamma_kp_{|J|}}.
\]
Initially,
\[
 \Phi_0=\sum_{j=0}^{\lfloor L^2\rfloor}M_j e^{-R\gamma_kp_j}
 \le(\lfloor L^2\rfloor+1)e^{-4\ell}<1.
\]
At step $t+1$, choose $H$ with $X_H\le B_k$ maximizing
\[
 \sum_{\substack{J\in\mathcal U_t\\J\subseteq H}}
        e^{-(R-t-1)\gamma_kp_{|J|}},
\]
breaking ties lexicographically, and remove the covered sets from
$\mathcal U_t$.  Since $\sum_{H:X_H\le B_k}a_H\le1$, the maximum
is at least its weighted sum over such $H$.  By~\eqref{eq:good},
that sum is at least
\[
 \sum_{J\in\mathcal U_t}\gamma_kp_{|J|}
              e^{-(R-t-1)\gamma_kp_{|J|}}.
\]
Using $1-u\le e^{-u}$ gives
\[
 \Phi_{t+1}
 \le\sum_{J\in\mathcal U_t}(1-\gamma_kp_{|J|})
             e^{-(R-t-1)\gamma_kp_{|J|}}
 \le\Phi_t.
\]
Hence $|\mathcal U_R|=\Phi_R\le\Phi_0<1$, so $\mathcal U_R=\varnothing$.
The chosen integers $D_H$ form a cover $\mathcal C_k$, with
$|\mathcal C_k|\le R$ and
$\max_{d\in\mathcal C_k}\log d\le B_k$.

The number of choices satisfies
\begin{align}
 R
 &\le\left(\frac1e+o(1)\right)e^xL\ell\notag\\
 &=\left(e^{c_0-1}+o(1)\right)\frac{k}{L^2}
  =\left(\frac1{\log3}+o(1)\right)\frac{k}{L^2}.
 \label{eq:cover-size}
\end{align}
In deriving this, $C_k/\ell\to1$, $x/L\to1$, and the $4\ell$ slack in
\eqref{eq:cover-count} is lower order.  The bound
$\max_{d\in\mathcal C_k}\log d\le B_k$ uses the expansion
\begin{align}
 B_k
 &=\frac{k}{L}
   +\bigl(3\ell+\log\ell-c_0+1+o(1)\bigr)\frac{k}{L^2}\notag\\
 &=\frac{k}{L}
   +\bigl(3\ell+\log\ell+h+o(1)\bigr)\frac{k}{L^2}.
 \label{eq:integer-size}
\end{align}
The next Taylor term is $O(k\ell^2/L^3)=o(k/L^2)$.
Combining~\eqref{eq:cover-size}, \eqref{eq:integer-size}, and
Lemma~\ref{lem:cover-identity} proves Theorem~\ref{thm:identity}.

\subsection{Exact-length permutation separation}
\label{sec:inversion}

We give the inversion with its lower-order terms to prove
Theorem~\ref{thm:sepp}.  Put
\[
 T=\log n,\qquad A=\log T,\qquad B=\log A,
\]
and choose
\begin{equation}\label{eq:k-choice}
 k=\left\lfloor
 T\bigl(A-2B-\log B-\kappa-\varepsilon_n\bigr)
 \right\rfloor,
\end{equation}
where $\varepsilon_n>0$ tends to zero sufficiently slowly.
Writing $s=-2B-\log B-\kappa-\varepsilon_n$, we have
\[
 \begin{split}
 \log k&=A+B+O(B/A),\\
 \log\log k&=B+O(B/A),\\
 \log\log\log k&=\log B+O(1/A).
 \end{split}
\]
In particular, the upper bound in~\eqref{eq:main-identity}, divided by
$T$, is
\begin{align*}
 1+\frac{s-B}{A}
  +\frac{3B+\log B+\kappa}{A}+o(1/A)
 &=1+\frac{s+2B+\log B+\kappa+o(1)}{A}\\
 &=1+\frac{-\varepsilon_n+o(1)}{A}<1.
\end{align*}
To make the last inequality precise, choose $\varepsilon_n$ to dominate
both $B^2/A$ and a tail envelope of the $o(1)$ remainder in
Theorem~\ref{thm:identity}; all candidate $k$ lie in
$[TA/2,2TA]$ for large $n$.  Such an envelope tends to zero, so this
choice is possible with $\varepsilon_n\to0$.

Thus the constructed words $u_k,v_k$ have common length less than $n$.
The distinct words $u_k0^{n-|u_k|}$ and $v_k0^{n-|u_k|}$ have length
exactly $n$ and form an identity of $\Sym_k$.  By
the correspondence in Section~\ref{sec:introduction}, they cannot be separated
by a permutation automaton with at most $k$ states.  Substituting~\eqref{eq:k-choice} proves
\eqref{eq:main-sepp}; $\Sepp(n)\ge\Sepp_{=}(n)$ gives the second assertion.

\end{document}